\documentclass[12pt]{article}
\usepackage{braket,amsfonts,amsmath,amssymb,amsthm}
\usepackage{mathrsfs,url,hyperref,tikz}

\title{Interior-Boundary Conditions for the Dirac Equation at Point Sources in 3 Dimensions}

\author{
Joscha Henheik\footnote{Mathematisches Institut,
     Eberhard-Karls-Universit\"at, Auf der Morgenstelle 10, 72076
     T\"ubingen, Germany and Institute of Science and Technology Austria (IST Austria), Am Campus 1, 3400 Klosterneuburg, Austria; E-mail:
     joscha.henheik@ist.ac.at}~~and
Roderich Tumulka\footnote{Mathematisches Institut,
	Eberhard-Karls-Universit\"at, Auf der Morgenstelle 10, 72076
	T\"ubingen, Germany; E-mail:
     roderich.tumulka@uni-tuebingen.de}
}
\date{June 20, 2022}

\addtolength{\textwidth}{2.0cm}
\addtolength{\hoffset}{-1.0cm}
\addtolength{\textheight}{2.4cm}
\addtolength{\voffset}{-1.5cm}

\newcommand{\Hilbert}{\mathscr{H}}
\newcommand{\Kilbert}{\mathscr{K}}
\newcommand{\sA}{\mathscr{A}}

\newcommand{\Q}{\mathcal{Q}}

\renewcommand{\Im}{\mathrm{Im}}

\newcommand{\RRR}{\mathbb{R}}
\newcommand{\CCC}{\mathbb{C}}
\newcommand{\SSS}{\mathbb{S}}

\newcommand{\nr}{{\mathrm{nr}}} 
\newcommand{\free}{{\mathrm{free}}}
\newcommand{\Nmax}{{N_{\max}}}
\newcommand{\ve}{\boldsymbol{e}}

\newcommand{\vx}{\boldsymbol{x}}

\newcommand{\vJ}{\boldsymbol{J}}
\newcommand{\vL}{\boldsymbol{L}}

\newcommand{\vS}{\boldsymbol{S}}
\newcommand{\valpha}{\boldsymbol{\alpha}}

\newcommand{\vomega}{\boldsymbol{\omega}}
\newcommand{\vzero}{\boldsymbol{0}}

\newtheorem{thm}{Theorem}
\newtheorem{lem}{Lemma}
\theoremstyle{definition}\newtheorem{rmk}{Remark}
\newcommand{\be}{\begin{equation}}
\newcommand{\ee}{\end{equation}}

\begin{document}
\maketitle
\begin{abstract}
A recently proposed approach for avoiding the ultraviolet divergence of Hamiltonians with particle creation is based on interior-boundary conditions (IBCs). The approach works well in the non-relativistic case, that is, for the Laplacian operator. Here, we study how the approach can be applied to Dirac operators. While this has been done successfully already in 1 space dimension, and more generally for codimension-1 boundaries, the situation of point sources in 3 dimensions corresponds to a codimension-3 boundary. One would expect that, for such a boundary, Dirac operators do not allow for boundary conditions because they are known not to allow for point interactions in 3d, which also correspond to a boundary condition. And indeed, we confirm this expectation here by proving that there is no self-adjoint operator on (a truncated) Fock space that would correspond to a Dirac operator with an IBC at configurations with a particle at the origin. However, we also present a positive result showing that there are self-adjoint operators with IBC (on the boundary consisting of configurations with a particle at the origin) that are, away from those configurations, given by a Dirac operator plus a sufficiently strong Coulomb potential.

\medskip

  \noindent 
PACS: 
11.10.Ef; 	
03.70.+k; 	
11.10.Gh. 	
  Key words: 
  self-adjoint extension of Dirac operator;
  particle creation;
  ultraviolet infinity;
  regularization of quantum field theory;
  probability current.
\end{abstract}

\newpage

\section{Introduction}

Hamiltonians for quantum theories with particle creation and annihilation are often plagued by ultraviolet divergence \cite{Opp30,vH52,Der03}. For defining such a Hamiltonian in a rigorous way, one might employ an ultraviolet cut-off corresponding to smearing out the source of particle creation over a positive volume, or in some cases one can obtain a renormalized Hamiltonian by taking a limit of the cut-off Hamiltonian in which the volume of the source tends to zero \cite{Lee54,Nel64,GJ85,GJ87}. Another, more recent approach is based on interior-boundary conditions (IBCs) \cite{TT15a,TT15b} and yields directly (i.e., without taking a limit) a Hamiltonian suitable for a point source. Here, the wave function $\psi$ is a function on the configuration space 
\be\label{Qdef}
\Q=\bigcup_{N=0}^\infty \Q_1^N
\ee
of a variable number of particles that can move in the 1-particle space $\Q_1$, for example
\be\label{Q1def}
\Q_1=\RRR^{3}\setminus\{\vzero\}
\ee
if particles can be created at a point source fixed at the origin $\vzero\in\RRR^3$. In this example, the boundary $\partial\Q$ of $\Q$ consists of configurations with a particle at the origin,
\be\label{dQdef}
\partial\Q = \bigcup_{N=0}^\infty \partial (\Q_1^N) = \bigcup_{N=0}^\infty \Bigl\{(\vx_1,\ldots,\vx_N)\in(\RRR^3)^N: \vx_i=\vzero \text{ for some }i\in\{1,\ldots,N\} \Bigr\}.
\ee
The IBC is a condition on $\psi$ relating values on the boundary and values in the interior, more precisely, relating the values on two configurations $q,q'$ that differ by the creation (respectively, annihilation) of a particle at the origin,
\begin{subequations}\label{qq'}
\begin{align}
q&=(\vx_1,\ldots,\vx_i=\vzero,\ldots,\vx_N)\in \partial\Q_1^N\text{ and}\\
q'&=(\vx_1,\ldots,\vx_{i-1},\vx_{i+1},\ldots,\vx_N)\in \Q_1^{N-1}\,.
\end{align}
\end{subequations}
As shown by Lampart et al.~\cite{ibc2a}, a Hamiltonian with particle creation at the origin can be defined rigorously in this way in the non-relativistic case without UV divergence or the need for renormalization. Our goal here is to examine in what way and to what extent this approach can be extended to the Dirac equation. 

There are already two works about IBCs for the Dirac equation: Schmidt et al.~\cite{IBCdiracCo1} have explored what IBCs for the Dirac equation can look like on a codimension-1 boundary. However, in our case the boundary \eqref{dQdef} has codimension 3. Lienert and Nickel~\cite{LN18} developed a quantum field theory (QFT) model in 1 space dimension using Dirac particles and an IBC that allows two particles to merge into one or one particle to split in two. In contrast, we consider here space dimension 3.

We present here a negative result and a positive one. The negative result (Theorem~\ref{thm:nogo}) asserts, roughly speaking, that in 3 space dimensions, there exists no self-adjoint Hamiltonian for the configuration space as in \eqref{Qdef}--\eqref{dQdef} (or even the truncated one allowing only $N=0$ and $N=1$) that acts like the free Dirac Hamiltonian away from the boundary but involves particle creation. Put differently, the free Dirac equation cannot be combined with IBCs in 3 dimensions. The fact is analogous to the known impossibility of point interaction ($\delta$ potentials) for the Dirac equation in 3 dimensions (see, e.g., \cite{Sve81,AGHKH88} and \cite[Theorem 1.1]{Tha91}), in particular since point interaction is described by a boundary condition \cite{BP35}, and for an external field that acts nontrivially only at $\vzero\in\RRR^3$, the relevant boundary for this boundary condition is precisely \eqref{dQdef}, or one sector thereof. Our proof of Theorem~\ref{thm:nogo} makes use of the impossibility theorems for point interaction to deduce the impossibility of IBCs in this situation. We formulate some variants of Theorem~\ref{thm:nogo} as Theorems~\ref{thm:general} and \ref{thm:weaknogo}.

The positive result (Theorem~\ref{thm:sa}) concerns a way in which nevertheless a self-adjoint Hamiltonian with particle creation can be rigorously defined by means of an IBC and the Dirac equation for the configuration space as in \eqref{Qdef}--\eqref{dQdef}; it is based on adding a potential. Specifically, we show in Theorem~\ref{thm:sa} that if a Coulomb potential of sufficient strength, centered at the origin and acting on each of the particles in the model, is added to the action of the Hamiltonian away from the boundary, then a self-adjoint version of the Hamiltonian exists that involves an IBC and leads to particle creation and annihilation at the origin, analogously to the non-relativistic case with the Laplacian operator. The IBC is analogous to the known IBCs for codimension-1 boundaries \cite{IBCdiracCo1}. We formulate the result for a truncated Fock space with only the $N=0$ and $N=1$ sectors. 

The results are formulated in detail in Section~\ref{sec:results}. They can be expressed as statements about self-adjoint extensions. That is because, when considering a truncated Fock space ($N\leq N_{\max}$) and configuration space then, for functions $\psi$ that vanish in a neighborhood of the boundary in the top sector $N_{\max}$ and also vanish in all lower sectors, we know how the desired Hamiltonian should act: like the free Dirac operator $H^\free$ of $N_{\max}$ particles (respectively, with a Coulomb potential). Let $D^\circ$ be the space of these functions (not dense). Thus, the desired Hamiltonian $H$ is a self-adjoint extension of a (symmetric but not closed) operator $H^\circ= H|_{D^\circ}=H^\free|_{D^\circ}$. (By the way, when we say ``self-adjoint,'' we always mean that the operator is densely defined.) The no-go result (Theorem~\ref{thm:nogo}) will show that $H^\circ$ has only one self-adjoint extension, viz., $H^\free$; that means that it is not possible to implement particle creation and annihilation at just one point $\vzero$, using IBCs or otherwise. Theorem~\ref{thm:sa} will take $H^\circ$ to include a suitable Coulomb potential and provide a self-adjoint extension (even several ones) featuring particle creation and annihilation. It remains to be seen whether and how IBCs can be employed in more realistic models of relativistic QFTs.

The IBC Hamiltonians provided by Theorem~\ref{thm:sa} are neither translation invariant nor rotation invariant, but that was only to be expected: the model cannot be translation invariant, given that the source is fixed at the origin, and the emission of a spin-$\tfrac12$ particle by a spinless source cannot conserve angular momentum and thus cannot be rotation invariant. (An alternative proof is given in Section~\ref{sec:rotation}.)\footnote{More generally, the emission of a spin-$\tfrac12$ particle by a source of \emph{any} spin cannot conserve angular momentum because (i)~the creation term in the Hamiltonian morally reads $\sum_{r,r',s} g_{rr's} a_r^\dagger(\vx) \, b_s^\dagger(\vx) \, a_{r'}(\vx)$ with $a_r^\dagger$ the creation operator of the emitting particle, $b_s^\dagger$ the creation operator of the emitted particle, and $g_{rr's}$ complex coefficients with $s$ a Dirac spinor index and $r$ and $r'$ indices referring to another representation space of the rotation group $SO(3)$ or its covering group; and (ii)~there is no rotation invariant array of coefficients $g_{rr's}$. In contrast, the emission of a spin-1 particle by a source of spin $\tfrac12$ \emph{can} conserve angular momentum, as the coefficients $g_{ss'\mu}=\gamma_{ss'\mu}$ are rotation invariant.}

Here are further comments on the literature. Some forms of interior-boundary conditions were considered, not necessarily with the UV problem in mind, early on by Moshinsky \cite{Mosh51a,Mosh51b,Mosh51c}. Self-adjoint Hamiltonians based on IBCs and the Laplacian were first rigorously defined by Thomas \cite{Tho84} and Yafaev \cite{Yaf92}. Lampart et al.~\cite{ibc2a} extended these results to the full Fock space and showed that the non-relativistic IBC Hamiltonian agrees, up to addition of a constant, with the one obtained through UV cut-off and renormalization. Lampart and Schmidt \cite{LS18} further extended the proofs to moving sources in 2 space dimensions and Lampart \cite{Lam18} in 3 dimensions. Keppeler and Sieber \cite{KS15} studied the 1-dimensional non-relativistic case. Bohmian trajectories for IBC Hamiltonians were defined and studied in \cite{Tum04,bohmibc}. Further studies of IBC Hamiltonians and their properties include \cite{ML91,timelike,Gal16,Sch18,ST18,co1}.

In Section~\ref{sec:results}, we describe our results. In Section~\ref{sec:proofs}, we collect the proofs.

\section{Results}
\label{sec:results}

For comparison, it will be useful to recapitulate some aspects of the non-relativistic case in 3 dimensions with a single point source fixed at the origin \cite{ibc2a}.  For simplicity, we consider a truncated Fock space of spinless particles,
\be
\Hilbert_\nr := \bigoplus_{N=0}^\Nmax \Hilbert^{(N)}_\nr := \bigoplus_{N=0}^\Nmax S_\pm L^2(\RRR^3,\CCC)^{\otimes N}
\ee
with $\oplus$ the orthogonal sum of Hilbert spaces and $S_\pm \cdots$ the image of the symmetrization operator $S_+$ or anti-symmetrization operator $S_-$ (and subscript $\nr$ for ``non-relativistic case''). 
There is a 5-parameter family of IBC Hamiltonians; some members of this family can be regarded as involving an external zero-range potential at the origin in addition to the particle source; there is a 2-parameter subfamily that can be regarded as involving no such potential, i.e., as being a pure particle source. The remaining parameters are the energy $E_0$ that must be expended for creating a particle and the strength $g$ of particle creation. Let us fix values $E_0>0$ and $g\in\CCC\setminus\{0\}$ and call the corresponding operator $H_\nr$. $H_\nr$ is a self-adjoint operator in $\Hilbert_\nr$; let $D_\nr$ denote its domain of self-adjointness. Functions $\psi=(\psi^{(0)},\ldots,\psi^{(\Nmax)})$ in $D_\nr$ satisfy the IBC
\be\label{IBC1}
\lim_{r\searrow 0} r \psi^{(N+1)}(\vx_1,\ldots,\vx_N,r\vomega) = -\frac{gm}{2\pi\hbar^2\sqrt{N+1}} \psi^{(N)}(\vx_1,\ldots,\vx_N)
\ee
(suitably understood, and with $r\searrow 0$ meaning the limit from the right) for $N=0,\ldots,\Nmax-1$ and every unit vector $\vomega\in\RRR^3$. Another operator to compare to is the free Hamiltonian $H^\free_\nr$, which acts on functions $\psi$ from its domain
\be
D^\free_\nr := \bigoplus_{N=0}^\Nmax S_\pm H^2(\RRR^{3N},\CCC)
\ee
(with $H^2$ the second Sobolev space) according to
\be
H^\free_\nr \psi^{(N)} = \sum_{j=1}^N H_{1\nr,j} \psi^{(N)}
\ee
with $H_{1\nr,j}$ the 1-particle Hamiltonian
\be
H_{1\nr} = E_0-\tfrac{\hbar^2}{2m}\Delta
\ee
acting on particle $j$.

Now we want to express that away from the boundary, $H_\nr$ acts like $H^\free_\nr$. However, the IBC enforces that if $\psi^{(N)}\neq 0$ for some $N\geq 0$, then every higher sector, $\psi^{(N')}$ with $N'>N$, must be nonzero (and, in fact, unbounded) in a neighborhood of the boundary $\partial\Q_1^{N'}$. But $\psi$ can lie in $D_\nr$ if $\psi^{(\Nmax)}$ vanishes in a neighborhood of the boundary and $\psi^{(N)}=0$ for all $N<\Nmax$; in fact, $\psi$ \emph{will} lie in $D_\nr$ if $\psi^{(\Nmax)}\in C_c^\infty(\Q_1^\Nmax,\CCC)$ (where $C_c^\infty$ means smooth functions with compact support) and $\psi^{(N)}=0$ for all $N<\Nmax$,
\be
D_\nr \supset D^\circ_\nr :=\{0\}\oplus \ldots \oplus\{0\}\oplus C_c^\infty(\Q_1^\Nmax,\CCC)\,.
\ee
Note that since $\vzero$ was excluded from $\Q_1$, ``compact support'' entails that the support stays away from the boundary. The symbol $\oplus$, when applied to sets that are not Hilbert spaces, should be understood as the Cartesian product, resulting in a subset of $\Hilbert_\nr$.
On $D^\circ_\nr$, $H_\nr$ acts like the free Hamiltonian,
\be
H_\nr \Big|_{D^\circ_\nr} = H^\free_\nr \Big|_{D^\circ_\nr} =: H^\circ_\nr\,. 
\ee
That is, both $(H_\nr,D_\nr)$ and $(H^\free_\nr,D^\free_\nr)$ are self-adjoint extensions of $(H^\circ_\nr,D^\circ_\nr)$. Note that $D^\circ_\nr$ is not a \emph{dense} subspace of $\Hilbert_\nr$ (whereas $D_\nr$ and $D^\free_\nr$ are). 
The condition that $(H_\nr,D_\nr)$ is an extension of $(H^\circ_\nr, D^\circ_\nr)$ expresses that in the highest sector, particle creation can occur only at the origin.
In passing, we remark that Yafaev \cite{Yaf92} showed for $\Nmax=1$ that all self-adjoint extensions of $( H^\circ_\nr, D^\circ_\nr)$ belong to the 5-parameter family of IBC Hamiltonians (which includes, as a subfamily, the Hamilonians without particle creation but with point interaction at the origin).

\subsection{No-Go Theorem}

Now we turn to the Dirac case. We define the truncated Fock space
\be
\Hilbert := \bigoplus_{N=0}^\Nmax \Hilbert^{(N)}:=\bigoplus_{N=0}^\Nmax S_\pm L^2(\RRR^3,\CCC^4)\,.
\ee
(Although spin-$\frac12$ particles are fermions in nature, we cover here also the mathematical case of Dirac particles that are bosons.) The 1-particle Hamiltonian is the Dirac Hamiltonian
\be
H_1 = -ic\hbar \valpha\cdot \nabla + mc^2\beta
\ee
with mass $m\geq 0$, which is self-adjoint on $D_1=H^1(\RRR^3,\CCC^4)$ (first Sobolev space). Let $H^\free_N$ be the free $N$-particle Hamiltonian, which acts according to
\be
H^\free_N \psi = \sum_{j=1}^N H_{1j} \, \psi\,,
\ee
$D(H^\free_N)$ its domain of self-adjointness in $L^2(\RRR^{3N},(\CCC^4)^{\otimes N})$, and
\be
H^\free \psi^{(N)} = H^\free_N \psi^{(N)}
\ee
on the domain
\be
D^\free = \bigoplus_{N=0}^\Nmax S_\pm D(H^\free_N) \,.
\ee
$H^\free$ is self-adjoint, and is the (truncated) ``second quantization'' of $H_1$ in $\Hilbert$.

The desired IBC Hamiltonian $(H,D)$, or in fact any Hamiltonian that agrees with $H^\free$ except for particle creation and annihilation at the origin, must be a self-adjoint extension of $(H^\circ,D^\circ)$ with
\be
D^\circ = \{0\}\oplus \ldots \oplus \{0\} \oplus S_{\pm}
C_c^\infty \bigl(\Q_1^\Nmax,(\CCC^4)^{\otimes N_{\max}}\bigr)
\ee
and
\be
H^\circ := H^\free \Big|_{D^\circ}\,.
\ee
Let $\Hilbert^{(<\Nmax)}:=\Hilbert^{(0)}\oplus \ldots \oplus \Hilbert^{(\Nmax-1)}$.

\begin{thm}\label{thm:nogo}
Let $\Nmax>0$.
For every self-adjoint extension $(H,D)$ of $(H^\circ,D^\circ)$, the highest sector decouples from the other sectors; that is, $H$ is block diagonal with respect to the decomposition $\Hilbert=\Hilbert^{(<\Nmax)}\oplus \Hilbert^{(\Nmax)}$.
\end{thm}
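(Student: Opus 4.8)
The plan is to show that for any self-adjoint extension $(H,D)$ of $(H^\circ,D^\circ)$, the off-diagonal blocks connecting $\Hilbert^{(\Nmax)}$ to $\Hilbert^{(<\Nmax)}$ vanish, by leveraging the known impossibility of point interactions for the Dirac operator in 3d. First I would set up the abstract framework: since $H^\circ = H^\free|_{D^\circ}$ is symmetric, $(H,D)$ is a self-adjoint extension of $(H^\circ, D^\circ)$, hence sits between $H^\circ$ and $(H^\circ)^*$. The key observation is that $D^\circ$ consists of functions supported only in the top sector and, within that sector, compactly supported away from the boundary $\partial\Q_1^\Nmax$. So $(H^\circ)^*$ acts on each sector independently in a sense to be made precise, and the deficiency subspaces live in the top sector plus the lower sectors separately. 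I would make this precise by computing $(H^\circ)^*$: a vector $\phi=(\phi^{(0)},\ldots,\phi^{(\Nmax)})$ lies in $D((H^\circ)^*)$ iff there is $\eta\in\Hilbert$ with $\langle H^\circ\psi,\phi\rangle = \langle\psi,\eta\rangle$ for all $\psi\in D^\circ$; since $\psi$ is supported in the top sector, this pairing only constrains $\phi^{(\Nmax)}$ (it must satisfy $H^\free_\Nmax\phi^{(\Nmax)}=\eta^{(\Nmax)}$ in the distributional sense on $\Q_1^\Nmax$), while $\phi^{(0)},\ldots,\phi^{(\Nmax-1)}$ and $\eta^{(0)},\ldots,\eta^{(\Nmax-1)}$ are completely unconstrained.

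Next I would exhibit the block structure of $H$ itself. Take $\psi\in D$ and decompose it as $\psi_{<}+\psi_{\Nmax}$ according to $\Hilbert=\Hilbert^{(<\Nmax)}\oplus\Hilbert^{(\Nmax)}$. Because $H\subseteq (H^\circ)^*$, the action of $H$ on the lower sectors is unconstrained by $H^\circ$, so one expects $H$ restricted to the lower sectors to itself be (essentially) a self-adjoint operator, and the coupling to the top sector to be forced to be trivial. The cleanest route: observe that the restriction of $H^\circ$ to $\Hilbert^{(<\Nmax)}$ is the zero operator on the trivial domain $\{0\}$, whereas its restriction to the top sector is $H^\free_\Nmax$ restricted to $S_\pm C_c^\infty(\Q_1^\Nmax,(\CCC^4)^{\otimes\Nmax})$ — i.e., the free $N_{\max}$-particle Dirac operator restricted to functions vanishing near configurations with a particle at the origin. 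The deficiency indices of $H^\circ$ therefore split as a direct sum, and I would argue that, because the top-sector piece is the one whose self-adjoint extensions are governed by the point-interaction theory, the relevant structure is entirely contained there.

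The heart of the argument is the reduction to the single-particle (or $N$-particle) Dirac point-interaction no-go result. Fix a particle index $i$ and consider the operator $H^\free_\Nmax$ restricted to $S_\pm C_c^\infty(\Q_1^\Nmax)$; by separating off the coordinates of the $N_{\max}-1$ spectator particles, this is (up to direct integral / tensor decomposition) built from the single Dirac operator $-ic\hbar\,\valpha\cdot\nabla+mc^2\beta$ on $C_c^\infty(\RRR^3\setminus\{\vzero\},\CCC^4)$. The cited theorems (\cite{Sve81,AGHKH88}, \cite[Theorem 1.1]{Tha91}) state that this operator is \emph{essentially self-adjoint}, with unique self-adjoint extension the free Dirac operator on $H^1(\RRR^3,\CCC^4)$. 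Consequently $H^\free_\Nmax|_{S_\pm C_c^\infty(\Q_1^\Nmax)}$ is essentially self-adjoint with unique extension $H^\free_\Nmax$, so $H^\circ$ has deficiency indices $(0,0)$ in the top sector. Then any self-adjoint extension $(H,D)$ of $(H^\circ,D^\circ)$ must, when restricted to vectors whose lower sectors vanish, agree with $H^\free_\Nmax$ on the top sector and in particular keep them in the top sector; a symmetric-operator / adjoint argument (if $\psi_\Nmax\in\Hilbert^{(\Nmax)}$ has $H\psi_\Nmax$ with a nonzero lower-sector component, test against vectors supported in lower sectors and use self-adjointness to derive a contradiction with the unconstrained freedom there) then forces the off-diagonal blocks to vanish, giving the claimed block-diagonal form.

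The main obstacle I expect is the passage from the \emph{many-body} operator $H^\free_\Nmax$ on $S_\pm C_c^\infty(\Q_1^\Nmax)$ to the \emph{one-body} essential self-adjointness statement: one must handle the (anti)symmetrization and the tensor-product/direct-integral structure carefully, and verify that removing the codimension-3 set $\{\vx_i=\vzero\}$ for a single $i$ (rather than the full $\partial\Q_1^\Nmax$) does not destroy essential self-adjointness — i.e., that $C_c^\infty((\RRR^3\setminus\{\vzero\})^\Nmax)$ is still a core. This is plausible because a codimension-3 submanifold is "invisible" to $H^1$ functions (it has zero $H^1$-capacity), so one can approximate any $C_c^\infty((\RRR^3)^\Nmax)$ function by functions vanishing near $\{\vx_i=\vzero\}$ in graph norm; making this approximation argument precise, uniformly in the spectator coordinates, is the technical crux. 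A secondary subtlety is bookkeeping the definition of $(H^\circ)^*$ so that the "unconstrained lower sectors" claim is rigorous, but that is routine once the domain of $(H^\circ)^*$ is written out explicitly.
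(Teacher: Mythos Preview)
Your proposal is correct and follows essentially the same two-step architecture as the paper: (i) establish that the top-sector operator $H^\free_{\Nmax}$ restricted to $S_\pm C_c^\infty(\Q_1^{\Nmax},(\CCC^4)^{\otimes\Nmax})$ is essentially self-adjoint, then (ii) run an abstract adjoint-domain argument to force block diagonality of any self-adjoint extension. The paper packages (ii) as a separate theorem (Theorem~\ref{thm:general}), proved via the graph closure $\overline{\Gamma(H^\circ)}$ and the ``adjoint domain'' $D^*(\overline{H^\circ})$ of the non-densely-defined $H^\circ$; your sketch of this part is less polished but has the right content. One notational caution: since $D^\circ$ is not dense, $(H^\circ)^*$ is not an operator, so statements like ``$H$ sits between $H^\circ$ and $(H^\circ)^*$'' or ``deficiency indices of $H^\circ$ split as a direct sum'' should be rephrased in terms of the adjoint \emph{domain} you in fact write down.

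The one substantive difference concerns what you flag as the ``main obstacle.'' You propose to reduce the many-body essential self-adjointness to the one-body statement via a tensor/direct-integral argument and an $H^1$-capacity approximation. The paper bypasses this entirely: Svendsen's theorem \cite{Sve81} (recorded here as Theorem~\ref{thm:Sve}) is already stated for first-order elliptic operators on $\RRR^n$ with a finite union of $C^\infty$ submanifolds of codimension $c$ removed, and gives essential self-adjointness iff $c\geq 2$. Applying it directly to $\RRR^{3\Nmax}$ with $M=\bigcup_i\{\vx_i=\vzero\}$ (codimension $3$) yields the required essential self-adjointness of $H^\free_{\Nmax}|_{C_c^\infty(\Q_1^{\Nmax})}$ in one stroke, and the (anti)symmetric version follows immediately. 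So your anticipated technical crux is not actually needed.
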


We give all proofs in Section~\ref{sec:proofs}. To paraphrase the conclusion of Theorem~\ref{thm:nogo}, the time evolution generated by $H$ involves no exchange between $\psi^{(\Nmax)}$ and the other sectors; no particle creation or annihilation occurs towards or from the $N_{\max}$-sector; in particular, $\|\psi^{(\Nmax)}\|$ is time independent. So Theorem~\ref{thm:nogo} implies that there is no IBC Hamiltonian for the Dirac equation in 3 space dimensions, as long as no further element such as potentials, space-time curvature, or other particles is introduced. 

Theorem~\ref{thm:nogo} is obtained by combining two theorems, Theorem~\ref{thm:Sve} and Theorem~\ref{thm:general}. The former is a specialized form of a theorem of Svendsen \cite{Sve81}:

\begin{thm}\label{thm:Sve}
Let $N\geq 1$ and $M\subset \RRR^{3N}$ the union of finitely many $C^\infty$ submanifolds of equal codimension $c$. Let 
$H_{\setminus M}$ be the restriction of $H^\free_N$ to $C_c^\infty(\RRR^{3N}\setminus M,(\CCC^4)^{\otimes N})$. Then $H_{\setminus M}$ is essentially self-adjoint in $L^2(\RRR^{3N},(\CCC^4)^{\otimes N})$ if and only if $c\geq 2$. In particular, the restriction of $H^\free_N$ to $C_c^\infty(\Q_1^N,(\CCC^4)^{\otimes N})$ is essentially self-adjoint, and its restriction to $S_{\pm}C_c^\infty(\Q_1^N,(\CCC^4)^{\otimes N})$ is essentially self-adjoint in $S_{\pm}L^2(\RRR^{3N},(\CCC^4)^{\otimes N})$.
\end{thm}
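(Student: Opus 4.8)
\medskip\noindent\emph{Proof proposal.}
Only the ``if'' part ($c\ge 2\Rightarrow$ essential self-adjointness) is needed later in this paper, and it has a short direct proof, so the plan is to carry that out in full; the ``only if'' part ($c=1\Rightarrow$ failure) is the substantive half of Svendsen's theorem \cite{Sve81}, whose localisation step I would cite rather than reprove. The ``in particular'' clauses then follow as corollaries. Throughout I use that $H^\free_N$ is essentially self-adjoint on $C_c^\infty(\RRR^{3N},(\CCC^4)^{\otimes N})$ --- a standard fact, seen by Fourier transform since $H^\free_N$ is then multiplication by the Hermitian matrix symbol $\xi\mapsto c\hbar\sum_j\valpha_j\cdot\xi_j+mc^2\sum_j\beta_j$, of linear growth.

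For $c\ge 2$ I would prove the stronger statement $\overline{H_{\setminus M}}=H^\free_N$ (the later applications want the closure identified, not just essential self-adjointness). The inclusion $\subseteq$ is trivial. For $\supseteq$, since $C_c^\infty(\RRR^{3N})$ is a core it suffices to approximate each $\phi\in C_c^\infty(\RRR^{3N})$ in graph norm by functions in $C_c^\infty(\RRR^{3N}\setminus M)$. Fix a compact $K\supseteq\mathrm{supp}\,\phi$; since $M$ is near $K$ contained in finitely many $C^\infty$ submanifolds of codimension $c$, one has the tubular bound $\bigl|\{x\in K:\mathrm{dist}(x,M)<\delta\}\bigr|\le C_K\delta^{c}$. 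Using $\rho:=\mathrm{dist}(\cdot,M)$ (treating the finitely many pieces separately if need be) I would build smooth cut-offs $\chi_\varepsilon$, $0\le\chi_\varepsilon\le1$, with $\chi_\varepsilon\equiv 0$ near $M$, $\chi_\varepsilon\to 1$ a.e., and $\|\nabla\chi_\varepsilon\|_{L^2(K)}\to 0$: for $c\ge 3$ a linear profile in $\rho/\varepsilon$ works, since then $\|\nabla\chi_\varepsilon\|_{L^2(K)}^2\lesssim\varepsilon^{-2}\varepsilon^{c}\to 0$; the borderline case $c=2$ needs instead a logarithmic profile interpolating between $\{\rho=\varepsilon^{2}\}$ and $\{\rho=\varepsilon\}$, which gives $\|\nabla\chi_\varepsilon\|_{L^2(K)}^2\lesssim(\log\tfrac1\varepsilon)^{-2}\int_{\varepsilon^{2}<\rho<\varepsilon}\rho^{-2}\lesssim(\log\tfrac1\varepsilon)^{-1}\to 0$. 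With $\phi_\varepsilon:=\chi_\varepsilon\phi\in C_c^\infty(\RRR^{3N}\setminus M)$ one has $\phi_\varepsilon\to\phi$ in $L^2$ by dominated convergence, and since $[H^\free_N,\chi_\varepsilon]=-ic\hbar\sum_j\valpha_j\cdot(\nabla_j\chi_\varepsilon)$ acts by multiplication by a matrix of pointwise norm $\le c\hbar|\nabla\chi_\varepsilon|$, $\|H^\free_N\phi_\varepsilon-H^\free_N\phi\|\le\|(\chi_\varepsilon-1)H^\free_N\phi\|+c\hbar\|\phi\|_\infty\sum_j\|\nabla_j\chi_\varepsilon\|_{L^2(K)}\to 0$. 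Hence $\phi\in D(\overline{H_{\setminus M}})$, so $\overline{H_{\setminus M}}=H^\free_N$.

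For $c=1$ one must produce a nonzero element of a deficiency space of $H_{\setminus M}$, i.e.\ a nonzero $u\in L^2(\RRR^{3N})$ with $(H^\free_N+i)u=0$ on $\RRR^{3N}\setminus M$. In the model case $N=1$, $M=\{x_1=0\}$, a partial Fourier transform in $(x_2,x_3)$ turns this into a constant-coefficient first-order ODE in $x_1$ for each transverse frequency $\vk=(k_2,k_3)$; a short computation with the Dirac algebra shows the characteristic exponents are $\pm\mu(\vk)$ with $\mu(\vk)\ge\mu_0>0$ real, so one may take the (two-dimensional space of) decaying modes for $x_1>0$ and the zero solution for $x_1<0$ and, gluing over $\vk$ with a smooth compactly supported weight, obtain such a $u$ --- necessarily with a jump across $\{x_1=0\}$, hence $u\notin D(H_1)$, confirming that $H_{\setminus M}$ is not essentially self-adjoint. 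For a general codimension-$1$ hypersurface one localises near a regular point, flattens $M$ by a diffeomorphism, runs this construction for the resulting constant-coefficient model, and absorbs the lower-order and variable-coefficient remainders by a fixed-point argument on a sufficiently small neighbourhood; this localisation is the technical heart of \cite{Sve81} and is the step I expect to be the main obstacle, so I would cite it. Finally, for the corollaries: the complement of $\Q_1^N=(\RRR^3\setminus\{\vzero\})^N$ in $\RRR^{3N}$ is $\bigcup_{i=1}^N\{\vx_i=\vzero\}$, a finite union of affine subspaces of common codimension $3\ge 2$, so the first part gives essential self-adjointness there; and the (anti)symmetriser $S_\pm$ is an orthogonal projection that commutes with $H^\free_N$ and preserves $C_c^\infty(\Q_1^N,(\CCC^4)^{\otimes N})$ (permutations preserve $\Q_1^N$), so the range of $H^\free_N\mp i$ on $S_\pm C_c^\infty(\Q_1^N,(\CCC^4)^{\otimes N})$ is $S_\pm$ applied to its range on $C_c^\infty(\Q_1^N,(\CCC^4)^{\otimes N})$; any $u\in S_\pm L^2$ orthogonal to the former is, since $S_\pm u=u$, orthogonal to the latter, whence $u=0$ by essential self-adjointness, giving essential self-adjointness of the restriction to $S_\pm C_c^\infty(\Q_1^N,(\CCC^4)^{\otimes N})$ in $S_\pm L^2(\RRR^{3N},(\CCC^4)^{\otimes N})$.
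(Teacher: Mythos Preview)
The paper does not actually prove this theorem: it is stated as ``a specialized form of a theorem of Svendsen \cite{Sve81}'' and is simply cited. The only justification the paper supplies is the two-line observation after the statement that, for $M=\bigcup_{i=1}^N\{\vx_i=\vzero\}$, one has $c=3$ and $\RRR^{3N}\setminus M=\Q_1^N$, which explains the first ``in particular'' clause; the symmetrized clause is not argued separately.

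Your proposal therefore goes well beyond what the paper does, and correctly so. The cut-off argument for $c\ge 2$ is the standard one and is carried out properly, including the logarithmic profile needed in the borderline case $c=2$; identifying $\overline{H_{\setminus M}}=H^\free_N$ (rather than merely essential self-adjointness) is a nice touch. Your derivation of the symmetrized corollary --- using that $S_\pm$ is an orthogonal projection commuting with $H^\free_N$ and preserving $C_c^\infty(\Q_1^N,(\CCC^4)^{\otimes N})$, and then passing to deficiency spaces --- is a clean argument that the paper omits. For the $c=1$ direction you are right to defer the localisation step to \cite{Sve81}; the paper does exactly the same (implicitly, by citing the whole result). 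In short: your approach is correct and strictly more detailed than the paper's, which treats Theorem~\ref{thm:Sve} as a black-box citation.
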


The last sentence follows by taking
\begin{subequations}
\begin{align}
M&=\bigl\{(\vx_1,\ldots,\vx_N)\in\RRR^{3N}: \vx_i=\vzero \text{ for some }i\bigr\}\\
&= \bigcup_{i=1}^N \bigl\{(\vx_1,\ldots,\vx_N)\in\RRR^{3N}: \vx_i=\vzero\bigr\},
\end{align}
\end{subequations}
so $c=3$ and $\RRR^{3N}\setminus M = \Q_1^N$.
This theorem excludes point interaction for the free Dirac equation in 3 space dimensions.
Note, however, that Theorem~\ref{thm:nogo} is not a direct corollary of Theorem~\ref{thm:Sve} because our Hilbert space $\Hilbert$ is not $L^2(\RRR^{3N},\CCC^k)$ but contains further sectors, and $D^\circ$ is not dense. What we need is the following statement, a kind of generalization of Theorem~\ref{thm:nogo}:

\begin{thm}\label{thm:general}
Let $\Nmax>0$, let $(\tilde H,\tilde D)$ be essentially self-adjoint in $\Hilbert^{(\Nmax)}$, and let now
\be
D^\circ:=\{0\}\oplus \ldots \oplus \{0\}\oplus \tilde D \subset \Hilbert
\ee
and $H^\circ: D^\circ\to \Hilbert$ be given by
\be
H^\circ \bigl(0,\ldots,0,\psi^{(\Nmax)}\bigr) := \bigl(0,\ldots,0,\tilde H \psi^{(\Nmax)}\bigr)\,.
\ee
For every self-adjoint extension $(H,D)$ of $(H^\circ,D^\circ)$, the highest sector decouples from the other sectors; in fact, with respect to the decomposition $\Hilbert=\Hilbert^{(<\Nmax)}\oplus \Hilbert^{(\Nmax)}$, $D=D^{(<\Nmax)}\oplus D^{(\Nmax)}$ and $H$ is block diagonal with blocks $H^{(<\Nmax)}$ and $H^{(\Nmax)}$, where $(H^{(\Nmax)},D^{(\Nmax)})$ is the unique self-adjoint extension of $(\tilde H, \tilde D)$.
\end{thm}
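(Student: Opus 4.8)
The plan is to deduce the block structure from just two inputs --- that $(H,D)$ \emph{extends} $(H^\circ,D^\circ)$ and that it is \emph{self-adjoint}, hence densely defined and closed --- while resisting the temptation to compute $(H^\circ)^*$ as an operator, since $D^\circ$ is not dense. Write a vector of $\Hilbert$ as $\phi=(\phi^{(<\Nmax)},\phi^{(\Nmax)})$ relative to $\Hilbert=\Hilbert^{(<\Nmax)}\oplus\Hilbert^{(\Nmax)}$, and put $\hat H:=\overline{\tilde H}$; essential self-adjointness of $(\tilde H,\tilde D)$ means $\hat H=\tilde H^*$ is its unique self-adjoint extension. \emph{First,} since $H$ is closed it extends $\overline{H^\circ}$, and because $H^\circ$ lives entirely in the top sector, where it equals $\tilde H$, one computes that $\overline{H^\circ}$ is the map $(0,\dots,0,\psi)\mapsto(0,\dots,0,\hat H\psi)$ with domain $\{0\}\oplus\dots\oplus\{0\}\oplus D(\hat H)$. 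Hence $\{0\}\oplus\dots\oplus\{0\}\oplus D(\hat H)\subseteq D$.

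\emph{Second,} for $\phi\in D$ and any $\psi=(0,\dots,0,\psi^{(\Nmax)})\in D^\circ$, symmetry of $H$ together with $D^\circ\subseteq D$ gives $\langle(H\phi)^{(\Nmax)},\psi^{(\Nmax)}\rangle=\langle H\phi,\psi\rangle=\langle\phi,H^\circ\psi\rangle=\langle\phi^{(\Nmax)},\tilde H\psi^{(\Nmax)}\rangle$ for all $\psi^{(\Nmax)}\in\tilde D$; this is exactly the statement that $\phi^{(\Nmax)}\in D(\tilde H^*)=D(\hat H)$ and $(H\phi)^{(\Nmax)}=\hat H\phi^{(\Nmax)}$. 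Combining with the first step, for $\phi\in D$ the vector $(0,\dots,0,\phi^{(\Nmax)})$ already lies in $D$, so $(\phi^{(<\Nmax)},0)=\phi-(0,\dots,0,\phi^{(\Nmax)})\in D$ as well. Therefore $D=D^{(<\Nmax)}\oplus D^{(\Nmax)}$ with $D^{(\Nmax)}=\{0\}\oplus\dots\oplus\{0\}\oplus D(\hat H)$ and $D^{(<\Nmax)}:=\{\phi^{(<\Nmax)}:(\phi^{(<\Nmax)},0)\in D\}$; and since $D$ is dense in $\Hilbert$, a one-line argument shows $D^{(<\Nmax)}$ is dense in $\Hilbert^{(<\Nmax)}$.

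\emph{Third,} I would decouple the action. Applying the second step to $\phi=(\phi^{(<\Nmax)},0)$ shows $(H\phi)^{(\Nmax)}=\hat H\cdot 0=0$, i.e.\ $H$ maps $D^{(<\Nmax)}\oplus\{0\}$ into $\Hilbert^{(<\Nmax)}$. Then, for $\phi\in D^{(\Nmax)}$ and any $\eta=(\eta^{(<\Nmax)},0)\in D$, symmetry gives $\langle\eta^{(<\Nmax)},(H\phi)^{(<\Nmax)}\rangle=\langle\eta,H\phi\rangle=\langle H\eta,\phi\rangle=0$, the last equality because $H\eta\in\Hilbert^{(<\Nmax)}$ is orthogonal to $\phi\in\Hilbert^{(\Nmax)}$; density of $D^{(<\Nmax)}$ then forces $(H\phi)^{(<\Nmax)}=0$. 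Thus $H$ also maps $D^{(\Nmax)}$ into $\Hilbert^{(\Nmax)}$, so $H=H^{(<\Nmax)}\oplus H^{(\Nmax)}$ is block diagonal, $H^{(\Nmax)}$ acts as $\hat H$ on $D(\hat H)$ --- the unique self-adjoint extension of $(\tilde H,\tilde D)$ --- and $H^{(<\Nmax)}$ is then automatically self-adjoint in $\Hilbert^{(<\Nmax)}$.

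The only genuinely non-bookkeeping ingredient is the density of $D^{(<\Nmax)}$, used in the third step to upgrade an orthogonality to a vanishing; it comes directly from the density of $D$, which is built into the meaning of ``self-adjoint.'' The conceptual point that makes the theorem true (and its proof slightly delicate) is that, $D^\circ$ being non-dense, $(H^\circ)^*$ is only a linear relation, whose multivalued part is all of $\Hilbert^{(<\Nmax)}\oplus\{0\}$; this is precisely the room in which the decoupled lower block $H^{(<\Nmax)}$ is free to be an arbitrary self-adjoint operator.
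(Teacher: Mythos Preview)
Your proof is correct and follows essentially the same route as the paper: closure of $H$ yields $\{0\}\oplus D(\hat H)\subseteq D$, self-adjointness (what the paper phrases via an ``adjoint domain'') yields $D\subseteq\Hilbert^{(<\Nmax)}\oplus D(\hat H)$ together with $(H\phi)^{(\Nmax)}=\hat H\phi^{(\Nmax)}$, and these combine to give the block structure. The only noteworthy difference is that in your third step you invoke density of $D^{(<\Nmax)}$ to kill $(H\phi)^{(<\Nmax)}$ for $\phi\in D^{(\Nmax)}$, whereas the paper simply reads off $H(0,\phi^{(\Nmax)})=(0,\hat H\phi^{(\Nmax)})$ directly from the graph inclusion established in step~1 --- a fact you already have, so your density argument, while correct, is not actually needed.
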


\subsection{Hamiltonians with Coulomb Potential}

In this section and Section~\ref{sec:exist}, we focus on the case of two sectors, i.e., $\Nmax=1$.
Our positive result is about examples of Dirac Hamiltonians in 3d with particle creation by means of IBCs. These Hamiltonians are based on the 1-particle Hamiltonian
\be\label{H1Coulomb}
H_1 = -ic\hbar \valpha\cdot \nabla + mc^2\beta + \frac{q}{|\vx|}
\ee
that consists of the free Dirac Hamiltonian plus a Coulomb potential of strength $q$ (i.e., $q$ is the product of the charge at $\vx$ and the charge at the origin). We will show in Theorem~\ref{thm:sa} that for $\sqrt{3}/2<|q|<1$, there exist IBC Hamiltonians. We conjecture that also for $|q| \geq 1$, IBC Hamiltonians exist. On the other hand, the following theorem, a generalization of Theorem~\ref{thm:nogo} in the case $\Nmax=1$, shows that for $|q|\leq \sqrt{3}/2$, no IBC Hamiltonian exists:

\begin{thm}\label{thm:weaknogo}
Let $\Hilbert= \CCC \oplus L^2(\RRR^3,\CCC^4)$, $D^\circ=\{0\}\oplus C_c^\infty(\Q_1,\CCC^4)$, and let $H^\circ$ act on $\psi=(0,\psi^{(1)})\in D^\circ$ like $H_1$ as in \eqref{H1Coulomb},
\be\label{Hcircdef}
H^\circ\psi = \bigl(0,H_1 \psi^{(1)}\bigr)
\ee
with $|q|\leq \sqrt{3}/2$. For every self-adjoint extension $(H,D)$ of $(H^\circ,D^\circ)$, the two sectors decouple, that is, $H$ is block diagonal. In fact, $D=\CCC \oplus D\bigl(\overline{H_1}\bigr)$ with $\overline{H_1}$ the closure of $H_1$, and for $\psi\in D$,
\be\label{Hweaknogo}
H\psi = \bigl(E_{00}\psi^{(0)}, \overline{H_1} \psi^{(1)}\bigr)
\ee
with some constant $E_{00}\in\RRR$.
\end{thm}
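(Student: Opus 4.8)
The plan is to obtain Theorem~\ref{thm:weaknogo} as the special case $\Nmax=1$ of Theorem~\ref{thm:general}. Setting $\tilde D:=C_c^\infty(\Q_1,\CCC^4)$ and $\tilde H:=H_1|_{\tilde D}$ with $H_1$ the Dirac--Coulomb operator of \eqref{H1Coulomb}, the pair $(H^\circ,D^\circ)$ in the statement of Theorem~\ref{thm:weaknogo} is exactly the pair built from $(\tilde H,\tilde D)$ in Theorem~\ref{thm:general}. Hence, provided $(\tilde H,\tilde D)$ is essentially self-adjoint in $\Hilbert^{(1)}=L^2(\RRR^3,\CCC^4)$, Theorem~\ref{thm:general} immediately yields that every self-adjoint extension $(H,D)$ of $(H^\circ,D^\circ)$ is block diagonal with respect to $\Hilbert=\CCC\oplus L^2(\RRR^3,\CCC^4)$, that $D=D^{(0)}\oplus D^{(1)}$, and that $(H^{(1)},D^{(1)})$ is the unique self-adjoint extension of $(\tilde H,\tilde D)$, namely $\overline{H_1}$ on $D(\overline{H_1})$. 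So apart from invoking Theorem~\ref{thm:general}, the only genuine input is the essential self-adjointness of $\tilde H$, and the identification of the $N=0$ block.

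For the essential self-adjointness of the Dirac--Coulomb operator on the \emph{punctured}-space core $C_c^\infty(\RRR^3\setminus\{0\},\CCC^4)$ when $|q|\le\sqrt3/2$ --- a classical fact --- I would argue in the standard way. Decompose $L^2(\RRR^3,\CCC^4)$ into partial-wave (spherical-spinor) subspaces indexed by a nonzero integer $\kappa$ (and a magnetic quantum number); under this decomposition $\tilde H$ becomes a direct sum of radial Dirac operators $h_\kappa$ on $L^2((0,\infty),\CCC^2)$ carrying a $1/r$ term, and $C_c^\infty(\RRR^3\setminus\{0\},\CCC^4)$ restricts in each channel to a core of the corresponding minimal radial operator. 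An indicial-equation analysis at $r=0$ shows that the two formal solutions of $h_\kappa u=0$ behave like $r^{\pm s_\kappa}$ with $s_\kappa=\sqrt{\kappa^2-q^2}$; the channel is in the limit-point case at $0$ --- so that the minimal radial operator is essentially self-adjoint there --- precisely when $s_\kappa\ge\tfrac12$, while the limit-point case at $\infty$ holds for all $\kappa$. The binding constraint is $\kappa=\pm1$, where $s_\kappa\ge\tfrac12\iff q^2\le\tfrac34$; for $|\kappa|\ge2$ and $|q|<1$ one has $s_\kappa\ge\sqrt3$ automatically. Thus for $|q|\le\sqrt3/2$ every channel is limit point at both endpoints, so $\tilde H$ is essentially self-adjoint. (This also exhibits $\sqrt3/2$ as the sharp threshold, consistent with the complementary range $\sqrt3/2<|q|<1$ in Theorem~\ref{thm:sa}; see the literature on the Dirac--Coulomb operator referenced above.)

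It then remains only to read off the $N=0$ block. Since $\Hilbert^{(0)}=\CCC$ is one-dimensional, any self-adjoint operator on it is everywhere defined and acts as multiplication by a real number; thus $D^{(0)}=\CCC$ and $H^{(0)}$ is multiplication by some $E_{00}\in\RRR$. Combined with the two preceding paragraphs this gives $D=\CCC\oplus D(\overline{H_1})$ and the formula \eqref{Hweaknogo}.

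I expect the essential-self-adjointness step to be the main obstacle: one needs the version of the result for the core of smooth functions vanishing near the \emph{origin} (not $C_c^\infty(\RRR^3,\CCC^4)$), with the optimal constant $\sqrt3/2$, and one must confirm that the partial-wave reduction is compatible with this particular core. Granting that, everything else is a direct application of Theorem~\ref{thm:general} together with the triviality of self-adjointness on the one-dimensional sector $\CCC$.
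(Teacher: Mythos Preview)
Your proof is correct and follows the paper's approach: apply Theorem~\ref{thm:general} with $\Nmax=1$, use the essential self-adjointness of the Dirac--Coulomb operator on $C_c^\infty(\Q_1,\CCC^4)$ for $|q|\le\sqrt3/2$ (which the paper states as Theorem~\ref{thm:weakCoulomb} and cites from the literature, whereas you sketch the standard partial-wave/limit-point argument), and then read off the one-dimensional block. The only cosmetic difference is that the paper rules out $D^{(0)}=\{0\}$ via density of $D$, while you invoke self-adjointness of the block on $\CCC$; both are immediate.
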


It is known (e.g., \cite[Prop.~A1]{ELS19}) that for $|q|<\sqrt{3}/2$, 
$D\bigl(\overline{H_1}\bigr) = H^1(\RRR^3,\CCC^4)$ (first Sobolev space), whereas for $|q|=\sqrt{3}/2$, the domain is bigger than the first Sobolev space. Theorem~\ref{thm:weaknogo} follows by means of Theorem~\ref{thm:general} from the following known theorem \cite[Thm.~6.9]{Wei87}, \cite{Gal17}: 

\begin{thm}\label{thm:weakCoulomb}
In $L^2(\RRR^3,\CCC^4)$, the operator $H_1$ as in \eqref{H1Coulomb} is essentially self-adjoint on $C_c^\infty(\Q_1,\CCC^4)$ if and only if $|q|\leq \sqrt{3}/2$.
\end{thm}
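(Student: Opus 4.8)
The plan is to use the rotational symmetry of $H_1$ to split the problem into radial ODE problems on the half-line, and then to invoke Weyl's limit-point/limit-circle criterion for each of them. First I would decompose $L^2(\RRR^3,\CCC^4)$ into the joint eigenspaces of total angular momentum and its third component, spanned by the spinor spherical harmonics $\Omega_{\kappa,\mu}$ with $\kappa$ a nonzero integer and $\mu$ running over $2|\kappa|$ values. Each such eigenspace is unitarily equivalent to $L^2((0,\infty),dr)\otimes\CCC^2$ through
\[
\psi(\vx)=\frac1r\Bigl(f(r)\,\Omega_{\kappa,\mu}(\vx/r)\,,\; i\,g(r)\,\Omega_{-\kappa,\mu}(\vx/r)\Bigr),\qquad r=|\vx|,
\]
and on it $H_1$ of \eqref{H1Coulomb} acts as the radial Dirac--Coulomb operator
\[
d_\kappa=\begin{pmatrix} mc^2+q/r & -c\hbar\,(\partial_r-\kappa/r)\\[1ex] c\hbar\,(\partial_r+\kappa/r) & -mc^2+q/r\end{pmatrix}.
\]
A routine bookkeeping argument --- expand a given $\psi\in C_c^\infty(\Q_1,\CCC^4)$ in the $\Omega_{\kappa,\mu}$, truncate the angular series, observe that each truncation lies in the span of ``separated'' functions $r^{-1}f(r)\,\Omega_{\kappa,\mu}$ with $f\in C_c^\infty((0,\infty))$, and that conversely every such separated function lies in $C_c^\infty(\Q_1,\CCC^4)$ --- shows that $H_1\big|_{C_c^\infty(\Q_1,\CCC^4)}$ has the same closure as its restriction to the span of all separated functions; the latter is the orthogonal direct sum of the $d_\kappa\big|_{C_c^\infty((0,\infty),\CCC^2)}$, each appearing with multiplicity $2|\kappa|$. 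Since deficiency indices are additive over orthogonal sums, $H_1$ is essentially self-adjoint on $C_c^\infty(\Q_1,\CCC^4)$ if and only if every $d_\kappa$ is essentially self-adjoint on $C_c^\infty((0,\infty),\CCC^2)$.

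Next I would classify the two endpoints of each $d_\kappa$. At $r=\infty$ the potential is $o(1)$, and one is in the limit-point case --- a standard fact for radial Dirac operators with a decaying potential (immediate from the mass gap when $m>0$, and a Levinson-type estimate when $m=0$). At $r=0$ I would insert the ansatz $u(r)\sim r^s\binom{a}{b}$ into $(d_\kappa-z)u=0$ and retain the leading ($1/r$) terms; the indicial equation is $q^2=c^2\hbar^2(\kappa^2-s^2)$, hence $s=\pm\gamma$ with $\gamma:=\sqrt{\kappa^2-(q/c\hbar)^2}$ (purely imaginary when $q^2>c^2\hbar^2\kappa^2$). A Frobenius-type or contraction-mapping argument then upgrades this to honest asymptotics of a fundamental system of solutions near $0$: up to a harmless logarithmic factor in the resonant cases where $2\gamma$ is a positive integer, one solution behaves like $r^{\gamma}(1+o(1))$ and an independent one like $r^{-\gamma}(1+o(1))$. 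Now $r^{\gamma}$ is always square-integrable near $0$, while $r^{-\gamma}$ is square-integrable near $0$ exactly when $\Re\gamma<\tfrac12$. So $d_\kappa$ is limit-point at $0$ --- equivalently, essentially self-adjoint --- if and only if $\gamma$ is real and $\gamma\ge\tfrac12$, i.e.\ $\kappa^2-(q/c\hbar)^2\ge\tfrac14$; otherwise it is limit-circle at $0$, with deficiency indices $(1,1)$.

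Finally I would read off the threshold. The inequality $\kappa^2-(q/c\hbar)^2\ge\tfrac14$ is tightest for $|\kappa|=1$, where it becomes $(q/c\hbar)^2\le\tfrac34$. Hence if $|q/c\hbar|\le\sqrt3/2$ (the form $|q|\le\sqrt3/2$ in the normalization $c\hbar=1$), every channel $d_\kappa$ is limit-point at both endpoints, so each is essentially self-adjoint and therefore so is $H_1$; and if $|q/c\hbar|>\sqrt3/2$, the channels $\kappa=\pm1$ are limit-circle at $0$, so $H_1$ has strictly positive deficiency indices and is not essentially self-adjoint. This is precisely Theorem~\ref{thm:weakCoulomb}.

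The hard part is the analysis of the singular radial system near $r=0$ in the borderline regime: one must obtain sharp asymptotics for the solutions of a first-order system with a $1/r$ singularity --- in particular handle the transition $(q/c\hbar)^2=\kappa^2$, where the two indicial exponents collide and then become imaginary, and the resonant cases in which $2\gamma$ is a positive integer --- and then read off $L^2$-membership at $0$ cleanly. The decomposition bookkeeping (that $C_c^\infty(\Q_1,\CCC^4)$ is compatible with the angular channels, so that no deficiency is created or lost) also needs care but is routine; the classification at $\infty$, and the explicit construction of deficiency elements for $|q|>\sqrt3/2$ (localize the $r^{-\gamma}$ solution near the origin), are comparatively easy.
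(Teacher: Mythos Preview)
The paper does not actually prove this theorem: it is quoted as a known result from Weidmann \cite[Thm.~6.9]{Wei87} and Gallone \cite{Gal17}, and used as input for Theorem~\ref{thm:weaknogo}. Your sketch reproduces precisely the standard argument found in those references --- the angular decomposition into the channels $\Kilbert_{m_j\kappa_j}$, reduction to the radial Dirac--Coulomb operators $h_{m_j\kappa_j}$, and Weyl limit-point/limit-circle analysis via the indicial exponents $\pm\sqrt{\kappa^2-q^2}$. In fact the paper uses exactly this decomposition later (Lemma~\ref{lem:H1Phi}) and quotes the per-channel criterion $q^2\le\kappa_j^2-\tfrac14$ in Lemma~\ref{lem:esa}, so your route is entirely consistent with the paper's framework. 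The argument is correct; the only places that require genuine work are the ones you flag yourself (compatibility of $C_c^\infty(\Q_1,\CCC^4)$ with the angular decomposition, and the Frobenius-type asymptotics near $r=0$ including the degenerate and resonant cases), and both are handled in the cited literature.
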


\begin{rmk}
It is also known \cite{Tha91} that 
\be\label{H1matrix}
H_1=-ic\hbar \valpha\cdot \nabla + mc^2\beta + V(\vx)
\ee
with a matrix-valued potential $V$ such that each component $V_{ij}$ satisfies the bound
\be\label{Vmatrix}
|V_{ij}(\vx)| \leq \frac{q}{|\vx|} +b
\ee
with constants $b>0$ and $0<q\leq 1/2$ is essentially self-adjoint. By Theorem~\ref{thm:general}, Theorem~\ref{thm:weaknogo} still applies if \eqref{H1Coulomb} is replaced by \eqref{H1matrix} and \eqref{Vmatrix} with $0<q\leq 1/2$.
\end{rmk}

\subsection{Existence of IBC Hamiltonian}
\label{sec:exist}

It will be helpful again to consider first the non-relativistic IBC Hamiltonian $(H_\nr,D_\nr)$, now for $\Nmax=1$, so $\Hilbert_\nr=\CCC\oplus L^2(\RRR^3,\CCC)$. We report a few facts \cite{ibc2a}: For every $\psi\in D_\nr$, the upper sector is of the form
\be\label{shortnr}
\psi^{(1)}(\vx) = c_{-1} |\vx|^{-1} + c_0 |\vx|^0 + o(|\vx|^0)
\ee
as $\vx\to \vzero$, with (uniquely defined) ``short distance coefficients'' $c_{-1},c_0\in \CCC$. $\psi\in D_\nr$ satisfies the IBC \eqref{IBC1}, which can be written in the form
\be
c_{-1} = g\,  \psi^{(0)}\,,
\ee
and the Hamiltonian acts on $\psi$ like 
\begin{subequations}
\begin{align}
(H_\nr\psi)^{(0)} &= g^* \, c_0\,,\\
(H_\nr\psi)^{(1)}(\vx) &= (E_0 -\tfrac{\hbar^2}{2m}\Delta)\psi^{(1)}(\vx)~~~\text{for }\vx\neq \vzero\,.
\end{align}
\end{subequations}
(At $\vx=\vzero$, the $\Delta\psi^{(1)}$, when understood in the distributional sense, includes a delta distribution stemming from the $|\vx|^{-1}$ contribution in \eqref{shortnr}.)

Now we turn again to the Dirac case with 
\be\label{miniFock}
\Hilbert=\CCC\oplus L^2(\RRR^3,\CCC^4)
\ee
and 1-particle operator $H_1$ as in \eqref{H1Coulomb}.
In the following, we take $H_1$ to be an operator on the domain 
\be\label{D1def}
D_1=C_c^\infty(\Q_1,\CCC^4)\subset L^2(\RRR^3,\CCC^4)
\ee
with adjoint $(H_1^*,D_1^*)$ in $L^2(\RRR^3,\CCC^4)$. 

The notation
\be
\Phi^{\pm}_{m_j,\kappa_j}
\ee
is common for certain functions that form an orthonormal basis of $L^2(\SSS^2,\CCC^4)$ (where $\SSS^2$ means the unit sphere in $\RRR^3$) and are simultaneous eigenvectors of $\vJ^2,K,J_3$ with $\vJ=\vL+\vS$ the total angular momentum and $K=\beta(2\vS\cdot \vL+1)$ the ``spin-orbit operator.'' Their explicit definition in terms of spherical harmonics can be found in, e.g., \cite[p.~126]{Tha91}. The symbols $m_j$ and $\kappa_j$ are the traditional names of their indices.

\begin{thm}\label{thm:sa}
Let $\Hilbert= \CCC \oplus L^2(\RRR^3,\CCC^4)$, $D^\circ=\{0\}\oplus C_c^\infty(\Q_1,\CCC^4)$, and $H^\circ$ be given by \eqref{Hcircdef} with $\sqrt{3}/2<|q|<1$. Set $B:=\sqrt{1-q^2}$ and note that $0<B<\tfrac12$. Choose $g\in\CCC\setminus\{0\}$ and
\be\label{mjkappaj}
(\tilde m_j,\tilde\kappa_j) \in \sA:=\Bigl\{ (-\tfrac12,-1),~(-\tfrac12,1),~(\tfrac12,-1),~(\tfrac12,1) \Bigr\}\,.
\ee
Then there is a self-adjoint extension $(H,D)$ of $(H^\circ,D^\circ)$ with the following properties:
\begin{enumerate}
\item Particle creation occurs, i.e., the two sectors do not decouple ($H$ is not block diagonal in the decomposition \eqref{miniFock}).
\item For every $\psi\in D$, the upper sector is of the form
\be\label{short}
\psi^{(1)}(\vx) = c_-\, f^{-}_{\tilde m_j \tilde\kappa_j}\bigl(\tfrac{\vx}{|\vx|}\bigr)\, |\vx|^{-1-B} + \hspace{-3mm}\sum_{(m_j,\kappa_j)\in\sA} \hspace{-3mm} c_{+ m_j \kappa_j}\, f^+_{m_j \kappa_j}\bigl(\tfrac{\vx}{|\vx|}\bigr) \, |\vx|^{-1+B} + o(|\vx|^{-1/2})
\ee
as $\vx\to \vzero$ with (uniquely defined) short distance coefficients $c_-,c_{+ m_j \kappa_j}\in \CCC$ and particular functions $f^{\pm}_{m_j \kappa_j}:\SSS^2\to\CCC^4$ given by
\begin{subequations}\label{fdef}
\begin{align}
f^+_{m_j \kappa_j} &= (1+q-B) \Phi^+_{m_j\kappa_j}-(1+q+B)\Phi^-_{m_j\kappa_j}\label{f+def}\\
f^{-}_{m_j \kappa_j} &= (1+q+B)\Phi^+_{m_j\kappa_j}- (1+q-B) \Phi^-_{m_j\kappa_j}\,.\label{f-def}
\end{align}
\end{subequations}
\item Every $\psi\in D$ obeys the IBC
\be\label{IBC}
c_- = g \, \psi^{(0)}\,,
\ee
and $H$ acts on $\psi\in D$ according to
\begin{subequations}\label{Hact}
\begin{align}
(H\psi)^{(0)}&= g^* \,4B(1+q)\, c_{+ \tilde m_j \tilde\kappa_j}\label{Hact0}\\
(H\psi)^{(1)}(\vx)&= \Bigl(-ic\hbar \valpha\cdot \nabla + mc^2\beta + \frac{q}{|\vx|}\Bigr)\psi^{(1)}(\vx)~~~~~(\vx\neq \vzero)\,.\label{Hact1}
\end{align}
\end{subequations}
\end{enumerate}
\end{thm}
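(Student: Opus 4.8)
\textbf{Proof plan for Theorem~\ref{thm:sa}.}

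The plan is to construct $(H,D)$ explicitly as a restriction of the adjoint $(H_1^*,D_1^*)$ of the minimal operator, coupled to the $N=0$ sector via the IBC, and then verify self-adjointness by a symmetric-boundary-form argument. First I would analyze the deficiency structure of $(H_1,D_1)$ with $\sqrt3/2<|q|<1$: since $H_1$ is no longer essentially self-adjoint (by Theorem~\ref{thm:weakCoulomb}), $D_1^*/\overline{D_1}$ is finite-dimensional. Decomposing $L^2(\RRR^3,\CCC^4)$ into partial-wave sectors via the basis $\Phi^\pm_{m_j\kappa_j}$ of $L^2(\SSS^2,\CCC^4)$, the radial Dirac operator with Coulomb potential is a $2\times2$ system on $(0,\infty)$ whose Frobenius indicial exponents at $r=0$ are $\pm B=\pm\sqrt{1-q^2}$ in the channels with $|\kappa_j|=1$ (the set $\sA$), and the defect appears precisely there: both $r^{-1\pm B}$ behaviors are $L^2$ near $0$ since $-1\pm B>-3/2$. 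So the singular channels contribute, and I would show that every $\psi^{(1)}\in D_1^*$ has the short-distance expansion \eqref{short}, with $f^\pm_{m_j\kappa_j}$ the spinor combinations \eqref{fdef} that diagonalize the radial system's leading behavior; this establishes property~2 and defines the coefficients $c_-,c_{+m_j\kappa_j}$ as continuous linear functionals on $D_1^*$.

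The next step is to compute the boundary form. For $\psi,\chi\in D_1^*$ one has $\langle H_1^*\psi,\chi\rangle-\langle\psi,H_1^*\chi\rangle$ equal to a surface term at $r=0$ (the term at $r=\infty$ vanishes by $L^2$-decay), obtained via $-ic\hbar\valpha\cdot\nabla$ and integration by parts: concretely $\lim_{r\searrow0}(-ic\hbar)\int_{\SSS^2}\psi^\dagger(r\vomega)(\valpha\cdot\vomega)\chi(r\vomega)\,r^2\,d\vomega$. Plugging in \eqref{short} and using orthonormality of the $\Phi^\pm$ together with the matrix elements of $\valpha\cdot\vomega$ between $\Phi^+$ and $\Phi^-$ in the same $(m_j,\kappa_j)$ channel, the cross terms $r^{-1-B}\cdot r^{-1+B}\cdot r^2=r^0$ survive and the $r^{\pm2B}$ terms drop, yielding a finite bilinear expression: schematically $\sum_{(m_j,\kappa_j)\in\sA}\bigl(\overline{c_-[\psi]}\,c_{+m_j\kappa_j}[\chi]-c_{+m_j\kappa_j}[\psi]\,\overline{c_-[\chi]}\bigr)$ times a real constant $\kappa$ I would pin down; the factor $4B(1+q)$ in \eqref{Hact0} is exactly what makes this work out, so I must track constants carefully here. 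Then on the full $\Hilbert=\CCC\oplus L^2$, the boundary form of $H^\circ{}^*$ acquired by adding the $\psi^{(0)}$ block is $\kappa\sum(\overline{c_-[\psi]}c_{+\tilde m_j\tilde\kappa_j}[\chi]-\cdots)$, and a direct check shows it vanishes on the subspace $D$ cut out by the IBC $c_-=g\psi^{(0)}$ and the action \eqref{Hact0}--\eqref{Hact1}, i.e.\ $D$ is symmetric.

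Finally I would upgrade symmetry to self-adjointness: either by a von~Neumann deficiency-index count — showing the IBC conditions reduce the deficiency spaces to zero dimension, matching one of the self-adjoint extensions parametrized by a Lagrangian subspace of the (symplectic) boundary-value space — or, more cleanly, by checking that $D$ equals its own annihilator under the boundary form, which combined with $D\subset D(H^{\circ*})$ and the standard criterion (a symmetric extension maximal with respect to the boundary form is self-adjoint) gives the result. The choice of one pair $(\tilde m_j,\tilde\kappa_j)\in\sA$ reflects that the $N=0$ sector is one-dimensional and can be coupled to only one of the four singular channels while keeping the extension self-adjoint; the remaining three channels must then be closed off by the condition $c_-=0$ there, which is implicitly built into \eqref{short} since only the chosen channel carries a $c_-$ term — I would make this explicit. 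Property~1, non-decoupling, is immediate from \eqref{IBC}: a state with $\psi^{(0)}\neq0$ forces $c_-\neq0$ hence $\psi^{(1)}\neq0$, and \eqref{Hact0} feeds $c_{+\tilde m_j\tilde\kappa_j}$ back into the $N=0$ sector. I expect the main obstacle to be the bookkeeping in the boundary-form computation: getting the spinor matrix elements $\langle\Phi^\pm,(\valpha\cdot\vomega)\Phi^\mp\rangle$, the powers of $r$, and the normalization of $f^\pm$ to conspire into exactly the constant $4B(1+q)$, and then verifying that the resulting symmetric operator is not merely symmetric but self-adjoint (no further extension possible), which requires knowing that $D_1^*$ contributes no boundary data beyond the channels in $\sA$ — that is where Theorem~\ref{thm:weakCoulomb}'s sharp threshold $|q|>\sqrt3/2$ is used.
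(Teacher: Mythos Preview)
Your plan is essentially the paper's own proof: angular-momentum decomposition into the sectors $\Kilbert_{m_j\kappa_j}$, identification of the four defect channels $\sA$ via the indicial exponents $\pm B$, the asymptotic expansion \eqref{short} defining the boundary functionals $c_-,c_+$, an integration-by-parts computation of the boundary form yielding precisely the constant $4B(1+q)$, and then self-adjointness by showing $D$ coincides with the adjoint domain (your ``annihilator'' option). The only points where the paper is slightly more explicit than your sketch are: (i) it cites Gallone--Michelangeli for the asymptotics \eqref{asymp} and for essential self-adjointness of the radial blocks with $|\kappa_j|\ge 2$, rather than redoing the Frobenius analysis; (ii) for the three channels in $\sA$ other than $(\tilde m_j,\tilde\kappa_j)$ it specifies the \emph{distinguished} self-adjoint extension (finite kinetic and potential energy separately), which is what forces $c_-=0$ there and makes \eqref{short} correct as stated; and (iii) it includes a short density argument for $\widehat D$, which you omit but will need.
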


Equivalently, the IBC \eqref{IBC} could be written as 
\be\label{IBC2}
\lim_{r\searrow 0} r^{1+B}\psi^{(1)}(r\vomega) = g\, f^-_{\tilde m_j \tilde\kappa_j}(\vomega) \, \psi^{(0)}
\ee
for every $\vomega\in\SSS^2$.

\begin{rmk}
It seems plausible that an analogous IBC Hamiltonian can be set up for $\Nmax>1$ by applying the same terms on each particle sector.
\end{rmk}

\begin{rmk}\label{rmk:family}
The IBC Hamiltonian described in Theorem~\ref{thm:sa} belongs to a whole family of IBC Hamiltonians in which \eqref{IBC} and \eqref{Hact0} are replaced by
\be\label{IBCgen}
a_1 \, c_- + a_2 \, c_{+ \tilde m_j \tilde\kappa_j} = g \, \psi^{(0)}
\ee
and
\be\label{Hact0gen}
(H\psi)^{(0)}= g^* \,(a_3\, c_- + a_4 \, c_{+ \tilde m_j \tilde\kappa_j})\,,
\ee
respectively, with real constants $a_1,\ldots,a_4$ satisfying
\be\label{acond}
a_1 a_4-a_2 a_3 = 4B(1+q)\,.
\ee
As we will show in the proof of Theorem~\ref{thm:sa}, each of these choices defines a self-adjoint operator. The Hamiltonian defined through \eqref{IBC} and \eqref{Hact0} corresponds to $a_1=1$, $a_2=0$, $a_3=0$, and $a_4=4B(1+q)$.
\end{rmk}

\begin{rmk}
In our construction of $H$, one angular momentum sector of $L^2(\RRR^3,\CCC^4)$, indexed by $(\tilde m_j,\tilde\kappa_j)$, gets coupled to the 0-particle sector. If we took the 0-particle sector to have more than 1 dimension, say $\Hilbert^{(0)}=\CCC^4$ instead of $\CCC$, then we could couple several angular momentum sectors listed in \eqref{mjkappaj} to several mutually orthogonal subspaces of the 0-particle sector $\Hilbert^{(0)}$. However, this mathematical possibility does not seem physically natural.
\end{rmk}

\begin{rmk}
It would be interesting to investigate also the case $|q|\geq 1$. We expect that IBC Hamiltonians exist also in that case because $h_{m_j\kappa_j}$ (an angular momentum block of $H_1$, see Section~\ref{sec:proofs}) is known to have multiple self-adjoint extensions also in that case. However, we do not have a proof.
\end{rmk}

\begin{rmk}
It is also of interest to define a $|\psi|^2$-distributed jump process for the Bohmian configuration in analogy to the processes defined in \cite{bohmibc} for non-relativistic IBC Hamiltonians. We plan to address this issue in a separate work.
\end{rmk}

\noindent\underline{\textit{Note added:}} After completion of this work, Binz and Lampart posted a preprint \cite{BL21} in which they described a general abstract framework for the construction of IBC Hamiltonians. The one given in Theorem \ref{thm:sa} can also be cast in this form. 

\subsection{Rotational Symmetry}
\label{sec:rotation}

The Hamiltonian $H$ provided by Theorem~\ref{thm:sa} is not rotationally symmetric, for example because the subspace $\Kilbert_{\tilde m_j \tilde\kappa_j}$ that plays a special role for $H$ is not invariant under $J_1$ or $J_2$, and thus not under the action of the rotation group (more precisely, of its covering group). One might think of coupling all four angular momentum sectors $\Kilbert_{m_j\kappa_j}$ with $(m_j,\kappa_j)\in\sA$ to the 0-particle sector in a symmetric way, but actually that does not help:

\begin{thm}\label{thm:rotation}
Let $\Hilbert$ and $(H^\circ,D^\circ)$ be as in Theorems \ref{thm:weaknogo} and \ref{thm:sa} with any $q\in \RRR$.
None of the self-adjoint extensions of $(H^\circ,D^\circ)$ with particle creation is rotationally symmetric.
\end{thm}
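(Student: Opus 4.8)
The plan is to derive a contradiction from the assumption that a self-adjoint extension $(H,D)$ of $(H^\circ,D^\circ)$ with particle creation is rotationally symmetric. First I would make precise what ``rotationally symmetric'' means in this context: there is a (projective, i.e.\ via the covering group $SU(2)$) unitary representation $U(R)$ of the rotation group on $\Hilbert = \CCC \oplus L^2(\RRR^3,\CCC^4)$ acting trivially (or at most by a phase) on the scalar $0$-particle sector $\CCC$ and in the usual way $\bigl(U(R)\psi^{(1)}\bigr)(\vx) = S(R)\,\psi^{(1)}(R^{-1}\vx)$ on the $1$-particle sector, such that $U(R)$ maps $D$ to $D$ and commutes with $H$. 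Equivalently, the total angular momentum $\vJ$ (with the $0$-particle contribution being zero) commutes with $H$, and generates a group leaving $D$ invariant.

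The key structural step is to look at how the $0$-particle sector sits inside $D$. Since $\dim \Hilbert^{(0)} = 1$ and rotations act trivially on it, any vector $\psi \in D$ of the form $(\psi^{(0)}, 0)$ is rotation-invariant, and hence so is $H\psi$. Now the whole point of ``particle creation'' is that the sectors do not decouple, so there exists $\psi \in D$ with $\psi^{(0)} \neq 0$ but $(H\psi)^{(1)} \neq 0$, and — by subtracting an element of the (rotation-invariant) closure of $H^\circ$ acting on the upper sector — one can arrange a vector whose upper component, near the origin, is dominated by the singular ``short-distance'' part living in the finite-dimensional space spanned by the $\Phi^\pm_{m_j\kappa_j}$ with $(m_j,\kappa_j)\in\sA$. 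The essential observation, already flagged in the footnote in the introduction, is that these four spinor-harmonics carry the $j=\tfrac12$ representation of $SU(2)$ (for each of the two values of $\kappa$), which is \emph{two-dimensional and irreducible}, containing \emph{no} rotation-invariant vector. Averaging the singular-asymptotics map $\psi \mapsto (c_{-},c_{+m_j\kappa_j})$ over the rotation group — using that $D$ and $H$ are rotation-invariant while $\psi^{(0)}$ is rotation-fixed — would force the singular coefficients attached to a rotation-fixed $0$-particle vector to be themselves rotation-fixed; but the only rotation-fixed vector in $j=\tfrac12$ is zero, so all the creation/annihilation couplings between the sectors vanish, i.e.\ $H$ is block diagonal, contradicting particle creation.

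Concretely, the steps in order: (i) fix the representation $U(R)$ and verify that $U(R)H^\circ U(R)^{-1} = H^\circ$ on $D^\circ$, so by uniqueness of the self-adjoint object under consideration $U(R) D = D$ and $U(R)HU(R)^{-1}=H$ for a rotationally symmetric extension; (ii) recall from the Coulomb-Dirac decomposition (as in the proof of Theorem~\ref{thm:sa}) that $H^\circ{}^*$ acts on $D_1^*$ and that the deficiency-type space responsible for any coupling between sectors is spanned, in its leading singular behavior, by the spinor harmonics with $(m_j,\kappa_j)\in\sA$, on which $J_3$ takes eigenvalues $\pm\tfrac12$ and $\vJ^2$ the eigenvalue $\tfrac34$; (iii) use that the $0$-particle sector is $U(R)$-invariant and one-dimensional, so the bilinear coupling form $\psi^{(0)} \mapsto (\text{singular coefficients of the image under }H)$ is $SU(2)$-equivariant from the trivial representation into the $j=\tfrac12$ representation; (iv) invoke Schur's lemma — $\mathrm{Hom}_{SU(2)}(\mathbf{1},\, j{=}\tfrac12)=0$ — to conclude the coupling vanishes; (v) conclude $H$ is block diagonal, so no particle creation, completing the contradiction. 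Because the argument only uses that the relevant spinor harmonics lie in $j=\tfrac12$ sectors, it is insensitive to the precise value of $q\in\RRR$ and to which self-adjoint extension we picked, matching the generality claimed.

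The main obstacle I anticipate is step (ii)–(iii): making rigorous the claim that \emph{every} coupling term between the two sectors, for \emph{any} self-adjoint extension, necessarily lives in the span of the $\Phi^\pm_{m_j\kappa_j}$ with $(m_j,\kappa_j)\in\sA$ and transforms in $j=\tfrac12$. This requires a clean description of $D_1^*/\overline{D_1}$ (the ``boundary data'' at the origin) as a representation of $SU(2)$ — one must check that the only angular-momentum channels in which the minimal Dirac--Coulomb operator $H_1$ on $C_c^\infty(\Q_1,\CCC^4)$ fails to be essentially self-adjoint are precisely the $|\kappa|=1$ channels, each contributing a $j=\tfrac12$ doublet, with all $|\kappa|\geq 2$ channels (hence all $j\geq\tfrac32$) already essentially self-adjoint for the $q$-range in question, and to handle $|q|\geq 1$ (where even more channels may open up) one notes those still decompose into $j=\tfrac12$ doublets for $|\kappa|=1$ and higher-$j$ pieces that, crucially, still contain no invariant vector. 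Once this equivariant bookkeeping is in place, the Schur-lemma punchline is immediate.
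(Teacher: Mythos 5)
Your core observation is exactly the one the paper uses, namely that $L^2(\RRR^3,\CCC^4)$ as a representation of the covering group $SU(2)$ of $SO(3)$ decomposes entirely into half-integer-spin components and therefore has \emph{no} nonzero invariant vector; that is the content of the paper's Lemma~\ref{lem:rotation}, which the paper proves in one line (a rotation by $2\pi$ acts as $-I$ on spinors). But your route from that fact to the theorem is quite different and substantially heavier than the paper's. You propose to decompose the operator into angular-momentum channels, identify which channels carry boundary data (deficiency-type coefficients), argue that any sector coupling must factor through those channels, and then invoke Schur's lemma $\mathrm{Hom}_{SU(2)}(\mathbf{1},\,j{=}\tfrac12)=0$. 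The paper avoids all of that with a short dynamical argument: start from the Fock vacuum $\psi_0=(1,0)$, which is rotation-invariant; a rotationally symmetric $H$ propagates it to rotation-invariant states $\psi_t$; rotations preserve particle number, so $\psi_t^{(1)}$ is itself rotation-invariant; if particle creation occurs then $\psi_t^{(1)}\neq 0$ for some $t$, contradicting Lemma~\ref{lem:rotation}. This completely bypasses the ``equivariant bookkeeping'' you correctly flag as the main obstacle in your steps (ii)--(iii), and it applies uniformly to all $q\in\RRR$ and all self-adjoint extensions with no channel-by-channel analysis.

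Two concrete soft spots in your proposal that the paper's route makes irrelevant: first, the description of the boundary data as living only in the $|\kappa_j|=1$ channels is specific to $\sqrt{3}/2<|q|<1$; for larger $|q|$ more channels open (you note this, but handling it cleanly requires an actual classification of the deficiency spaces and their $SU(2)$-decomposition for arbitrary $q$). Second, the maneuver of ``subtracting an element of the closure of $H^\circ$ acting on the upper sector'' to isolate the singular part of $(H\psi)^{(1)}$ is not obviously compatible with preserving the rotation-equivariance of the resulting map; one would have to check that the subtraction can be done equivariantly. None of this is needed: once you have Lemma~\ref{lem:rotation}, the vacuum-evolution trick gives the conclusion directly.
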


The proof is based on the following fact that we also prove in Section~\ref{sec:proofs}.

\begin{lem}\label{lem:rotation}
The only vector in $L^2(\RRR^3,\CCC^4)$ invariant under rotations (i.e., under the representation of the covering group of $SO(3)$) is the zero vector.
\end{lem}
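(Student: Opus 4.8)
\textbf{Proof proposal for Lemma~\ref{lem:rotation}.}

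The plan is to decompose $L^2(\RRR^3,\CCC^4)$ into angular momentum sectors and show that none of them contains a nonzero rotation-invariant vector, because every irreducible summand carries a nontrivial representation. First I would use the standard separation of variables $L^2(\RRR^3,\CCC^4)\cong L^2((0,\infty),r^2\,dr)\otimes L^2(\SSS^2,\CCC^4)$ together with the decomposition of $L^2(\SSS^2,\CCC^4)$ into the orthonormal basis $\{\Phi^{\pm}_{m_j,\kappa_j}\}$ introduced in the excerpt, which are simultaneous eigenvectors of $\vJ^2$, $K$, $J_3$. The key point is that for each fixed pair of ``angular'' quantum numbers the span of $\{\Phi^{\pm}_{m_j,\kappa_j}: m_j\}$ (at fixed total angular momentum $j=|\kappa_j|-\tfrac12$, i.e.\ fixed eigenvalue $j(j+1)$ of $\vJ^2$) is an irreducible representation space of the covering group $SU(2)$ of $SO(3)$ of dimension $2j+1\geq 2$, since the smallest value of $j$ occurring in the Dirac four-spinor decomposition is $j=\tfrac12$. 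Crucially, $j=0$ never appears: the $\Phi$'s are built from spherical harmonics $Y_\ell^{m}$ coupled with a spin-$\tfrac12$ factor, so the allowed $j$ are $\ell\pm\tfrac12\geq\tfrac12$.

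Next I would invoke Schur's lemma (or simply the elementary fact that in a nontrivial irreducible representation of $SU(2)$ there is no nonzero invariant vector): a vector $\psi\in L^2(\RRR^3,\CCC^4)$ invariant under all rotations must, when expanded in the basis $\Phi^{\pm}_{m_j,\kappa_j}$ with $L^2((0,\infty),r^2\,dr)$-valued coefficients, have each of its projections onto a fixed-$j$, fixed-$\kappa_j$ isotypic component be an invariant vector in a tensor product (radial space) $\otimes$ (spin-$j$ irrep); but an invariant vector in $V\otimes W$ with $W$ a nontrivial irrep and $V$ trivial under the group forces the component to vanish. Since every component vanishes, $\psi=0$. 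Equivalently and perhaps more cleanly: the total angular momentum operators $J_1,J_2,J_3$ are self-adjoint generators of the rotation representation, so $\psi$ is rotation invariant iff $J_a\psi=0$ for $a=1,2,3$, hence $\vJ^2\psi=0$; but $\vJ^2$ has spectrum $\{j(j+1): j\in\{\tfrac12,\tfrac32,\ldots\}\}$, which does not contain $0$, so $\psi=0$.

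The main obstacle — really the only thing requiring care — is making precise that $\vJ^2$ on $L^2(\RRR^3,\CCC^4)$ genuinely has no zero in its spectrum, i.e.\ that the value $j=0$ is absent. This is where the four-component (spinor) structure matters: for a scalar $L^2(\RRR^3,\CCC)$ the constant-angle functions (the $\ell=0$ spherical harmonics) would give invariant vectors, so the lemma would be false; it is precisely the half-integer shift from the spin-$\tfrac12$ factor that removes $j=0$. I would therefore spend a sentence or two recalling from \cite[p.~126]{Tha91} that each $\Phi^{\pm}_{m_j,\kappa_j}$ lies in the eigenspace of $\vJ^2$ with eigenvalue $j(j+1)$ where $j=|\kappa_j|-\tfrac12\in\{\tfrac12,\tfrac32,\ldots\}$, and that these span $L^2(\SSS^2,\CCC^4)$, so that $0\notin\mathrm{spec}(\vJ^2)$ on all of $L^2(\RRR^3,\CCC^4)$. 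Everything else is the routine observation that an invariant vector is annihilated by the infinitesimal generators.
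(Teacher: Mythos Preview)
Your argument is correct, but the paper's proof is considerably shorter and more elementary. Instead of decomposing $L^2(\RRR^3,\CCC^4)$ into angular momentum sectors and appealing to the spectrum of $\vJ^2$ (or Schur's lemma), the paper simply observes that the element of the covering group corresponding to a full $360^\circ$ rotation acts as the identity on $\RRR^3$ but as $-I$ on spinors; hence it sends any $\psi$ to $-\psi$, and invariance forces $\psi=-\psi$, i.e.\ $\psi=0$. This one-line argument bypasses the entire machinery of the $\Phi^{\pm}_{m_j,\kappa_j}$ decomposition and the verification that $j=0$ does not occur.

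Both proofs ultimately exploit the half-integer spin, but in different ways: you use it to exclude $0$ from $\mathrm{spec}(\vJ^2)$, while the paper uses it to exhibit a single group element acting as $-I$. Your approach has the advantage of being more structural (it identifies exactly which irreducibles occur) and would generalize to showing there are no low-$j$ invariants either; the paper's approach is quicker and needs no reference to the explicit angular decomposition or to self-adjointness of the generators.
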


\section{Proofs}
\label{sec:proofs}

\begin{proof}[Proof of Theorem~\ref{thm:nogo}]
Let $\tilde{D}=S_{\pm} C_c^\infty\bigl(\Q_1^{N_{\max}},(\CCC^4)^{\otimes N_{\max}}\bigr)$ and $\tilde H$ the restriction of $H^\free_N$ to $\tilde D$. By Theorem~\ref{thm:Sve}, $(\tilde H,\tilde D)$ is essentially self-adjoint in $\Hilbert^{(N_{\max})}$.
Now Theorem~\ref{thm:general} yields the statement of Theorem~\ref{thm:nogo}.
\end{proof}

\begin{proof}[Proof of Theorem~\ref{thm:general}]
As a shorthand notation, we write
\be
\Hilbert^{<}:= \Hilbert^{(<N_{\max})}\,,~~~\Hilbert^{=} := \Hilbert^{(N_{\max})}
\ee
and correspondingly $H^{<},D^{<},H^{=},D^{=}$. 
Let $\Gamma(A)$ denote the graph of an operator $(A,D(A))$ and $\overline{\Gamma(A)}$ its closure in $\Hilbert\oplus \Hilbert$, which is the graph of the closure of $A$, $\Gamma\bigl(\overline{A}\bigr)$. 
Since $\tilde H$ is essentially self-adjoint, its closure $\overline{\tilde H}$ is its self-adjoint extension $H^{=}$.
Since $(H^\circ, D^\circ)$ is symmetric (and 
$\tilde H$ is densely defined), it is closable with closure $\bigl( \overline{H^\circ}, D\bigl(\overline{H^\circ}\bigr) \bigr)$, and we get that
\begin{subequations}\label{graphH}
\begin{align}
	 \Gamma(H)  = \overline{\Gamma(H)} 
	 &\supset\overline{\Gamma(H^\circ)}
	 = \overline{\left\{\begin{pmatrix}
		0 \\ f \\ 0 \\ \tilde H f \end{pmatrix}  
	: f \in \tilde D \right\}}
	= \left\{\begin{pmatrix}
		0 \\ f \\ 0 \\ g \end{pmatrix}  
	: \begin{pmatrix}f\\g\end{pmatrix} \in \overline{\Gamma(\tilde H)} \right\} \\[3mm]
	&=\left\{\begin{pmatrix}
		0 \\ f \\ 0 \\ g \end{pmatrix}  
	: \begin{pmatrix}f\\g\end{pmatrix} \in \Gamma(H^{=}) \right\}
	=\left\{\begin{pmatrix}
		0 \\ f \\ 0 \\ H^{=}f \end{pmatrix}  
	: f \in D^{=} \right\}\,,
\end{align}
\end{subequations}
where $0$ means the zero of $\Hilbert^{<}$. In particular,
\begin{equation}\label{domain}
	D \supset D\bigl(\overline{H^\circ}\bigr) = \{0\} \oplus D^{=}\,.
\end{equation}
Now define the ``adjoint domain'' of the not densely defined operator $\overline{H^\circ}$ as 
\begin{equation}\label{adjointdomain}
	D^*\bigl(\overline{H^\circ}\bigr) = \Bigl\{ \phi \in \Hilbert ~:~ \exists \eta \in \Hilbert~\forall \psi \in D\bigl(\overline{H^\circ}\bigr): \braket{\phi \vert \overline{H^\circ} \psi}_{\Hilbert} = \braket{\eta \vert  \psi}_{\Hilbert}  \Bigr\}
\end{equation}
and note that 
\begin{equation}\label{domain3}
D \subset D^*\bigl(\overline{H^\circ}\bigr). 
\end{equation}
By \eqref{domain}, $\psi$ in \eqref{adjointdomain} is of the form $(0,f)$ with $f\in D^{=}$, so (writing $\phi^{=}=g$ and $\eta^{=}=h$)
\begin{subequations}
\begin{align}
\label{domain2}
	D^*(\overline{H^\circ})
	&=\Hilbert^{<} \oplus \Bigl\{ g \in \Hilbert^{=}  ~:~ \exists h \in \Hilbert^{=}~\forall f \in D^{=}:  \braket{g \vert H^{=} f}_{\Hilbert^{=}} = \braket{h \vert  f}_{\Hilbert^{=}}  \Bigr\}\\
 &=\Hilbert^{<} \oplus D^{=} 
\end{align}
\end{subequations}
by self-adjointness of $H^{=}$. We thus obtain the chain of inclusions
\begin{equation}\label{chain}
	\{0\} \oplus D^{=} \subset D \subset \Hilbert^{<} \oplus D^{=}. 
\end{equation}
This entails further that
\be\label{Doplus}
D=D^{<} \oplus D^{=} \text{ with }
D^{<} := \{\psi^{<}: \begin{pmatrix}\psi^{<}\\ \psi^{=} \end{pmatrix} \in D \}\,.
\ee
Indeed, writing vectors now as rows,
for any $(\psi^{<}, \psi^{=})\in D$, we have that $\psi^{<}\in D^{<}$ by definition of $D^{<}$ and $\psi^{=}\in D^{=}$ by \eqref{chain}, so $(\psi^{<}, \psi^{=})\in D^{<} \oplus D^{=}$. Conversely, if $\psi^{<}\in D^{<}$ and $\psi^{=}\in D^{=}$, then by definition of $D^{<}$ there is $\phi^{=}\in \Hilbert^{=}$ such that $(\psi^{<},\phi^{=})\in D$, but then $\phi^{=}\in D^{=}$ by \eqref{chain}, so $\psi^{=}-\phi^{=} \in D^{=}$ since $D^{=}$ is a subspace, so $(0,\psi^{=}-\phi^{=}) \in D$ by \eqref{chain}, so $(\psi^{<},\psi^{=}) = (\psi^{<},\phi^{=}) + (0,\psi^{=}-\phi^{=}) \in D$ since $D$ is a subspace.

Now we turn to the action of the operator $H$ and claim that for every $\phi = (\phi^{<}, \phi^{=}) \in D^{<} \oplus D^{=}$,
\begin{equation}\label{Hphi=}
(H\phi)^{=} = H^{=} \phi^{=}
\end{equation}
regardless of the choice of $\phi^{<}\in D^{<}$.
Indeed, from \eqref{domain3} we obtain that
\begin{align}
	&\forall \begin{pmatrix}
	\phi^{<} \\ \phi^{=}
	\end{pmatrix} = \phi \in D  ~~  \exists \begin{pmatrix}
	\eta^{<} \\ \eta^{=}
	\end{pmatrix} = \eta \in \Hilbert ~~ \forall \begin{pmatrix}
	\psi^{<} \\ \psi^{=}
	\end{pmatrix} = \psi \in D(\overline{H^\circ}):\nonumber \\
	&\braket{ (H\phi)^{=} \vert \psi^{=}}_{\Hilbert^{=}}	\stackrel{\psi^{<} = 0}{=} \braket{ H\phi \vert \psi}_{\Hilbert} \stackrel{H \text{ s.a.}}{=} \braket{\phi \vert H\psi}_{\Hilbert} \nonumber \\[3mm] 
	&\stackrel{\overline{H^\circ} \subset H}{=}
	\braket{\phi \vert \overline{H^\circ} \psi}_{\Hilbert} \stackrel{\eqref{adjointdomain}}{=}  \braket{\eta \vert \psi}_{\Hilbert} \stackrel{\psi^{<} = 0}{=} \braket{\eta^{=} \vert \psi^{=}}_{\Hilbert^{=}} \stackrel{\eqref{domain2}}{=} \braket{\phi^{=}\vert H^{=}\psi^{=}}_{\Hilbert^{=}} \nonumber \\[3mm]
	& \stackrel{H^{=} \text{ s.a.}}{=}
	\braket{H^{=}\phi^{=} \vert \psi^{=}}_{\Hilbert^{=}}.
\end{align}
By \eqref{domain}, $\braket{ (H\phi)^{=} \vert f}_{\Hilbert^{=}} = \braket{H^{=}\phi^{=} \vert f}_{\Hilbert^{=}}$ for every $f\in D^{=}$. Since $D^{=}$ is dense, \eqref{Hphi=} follows.

Now it follows further from \eqref{Hphi=} that there is an operator $H^{<}:D^{<}\to \Hilbert^{<}$ such that
\begin{equation}\label{Hphi<}
(H\phi)^{<} = H^{<} \phi^{<}
\end{equation}
regardless of $\phi^{=}$. Indeed, by setting $\phi^{=}=0$ we obtain from \eqref{Hphi=} that $(H(\phi^{<},0))^{=}=0$ and define
\be
H\begin{pmatrix}\phi^{<}\\0\end{pmatrix}=\begin{pmatrix}H^{<}\phi^{<}\\0\end{pmatrix}\,.
\ee
From \eqref{graphH} it follows that
\be
H\begin{pmatrix}0\\\phi^{=}\end{pmatrix}=\begin{pmatrix}0\\H^{=}\phi^{=}\end{pmatrix}\,.
\ee
Thus,
\be
H \begin{pmatrix}\phi^{<} \\ \phi^{=} \end{pmatrix} 
= H\begin{pmatrix}\phi^{<}\\0\end{pmatrix} + H\begin{pmatrix}0\\\phi^{=}\end{pmatrix}
= \begin{pmatrix}H^{<}\phi^{<}\\0\end{pmatrix}
+\begin{pmatrix}0\\H^{=}\phi^{=}\end{pmatrix}
= \begin{pmatrix}H^{<}\phi^{<}\\H^{=}\phi^{=}\end{pmatrix}
\ee
for all $\phi \in D =  D^{<} \oplus D^{=}$, that is, $H$ is block diagonal. 
This completes the proof.
\end{proof}

\begin{proof}[Proof of Theorem~\ref{thm:weaknogo}]
Use Theorems~\ref{thm:general} and \ref{thm:weakCoulomb} with $N_{\max}=1$, $\tilde H=H_1$ as in \eqref{H1Coulomb}, $\tilde D= C_c^\infty(\RRR^3\setminus\{\vzero\},\CCC^4)$. Since $\Hilbert^{<}=\CCC$, the only possibilities for $D^{<}$ are $D^{<}=\{0\}$ and $D^{<}=\CCC$, and since $D$ is dense, only $D^{<}=\CCC$ remains. In particular, $H^{<}$ is multiplication by some real constant $E_{00}$.
\end{proof}
 
\begin{proof}[Proof of Theorem~\ref{thm:sa}]
We begin by reviewing the well known decomposition of Hilbert space $\Hilbert^{(1)}=L^2(\RRR^3,\CCC^4)$ in terms of the $\Phi^{\pm}_{m_j,\kappa_j}$. 
By passing to spherical coordinates and denoting with $d^2\vomega$ the surface measure of the unit sphere $\SSS^2$, we obtain the canonical isomorphism
\begin{equation}
	U : \Hilbert^{(1)} \to L^2((0,\infty),\CCC, dr) \otimes L^2(\SSS^2, \CCC^4, d^2\vomega)
\end{equation}
by setting for each $\psi \in \Hilbert^{(1)}$
\begin{equation}\label{Udef}
(U\psi)(r,\vomega) = r \psi(r\vomega). 
\end{equation}
Under this transformation, the Dirac-Coulomb operator $H_1$ as in \eqref{H1Coulomb} takes the form \cite[p.~125]{Tha91}
\begin{equation}
UH_1U^{\dagger} = -i\alpha_r \left(\partial_r +\frac{1}{r}- \frac{1}{r}\beta K\right)+m\beta + \frac{q}{r}, 
\end{equation}
where $\alpha_r = \ve_r \cdot \valpha$ is the radial component of $\valpha$, $K=\beta(2\vS\cdot \vL +1)$ is the spin-orbit operator consisting of the spin operator $\vS = -\frac{i}{4}\valpha \times \valpha$ and the angular momentum operator $\vL = \vx \times (-i \nabla)$. Denoting with $\vJ=\vL+\vS$ the total angular momentum, one can show \cite{Tha91} that $K$ commutes with $\vJ^2$ and with the third component $J_3$ of $\vJ$. The $\Phi_{m_j, \kappa_j}^{\pm}$ form a joint eigenbasis of $\vJ^2, K, J_3$ with eigenvalues $j(j+1)$, $\kappa_j$, and $m_j$, respectively, thus providing the following orthogonal decomposition: 
\begin{equation}
L^2(\SSS^2,\CCC^4,d\Omega) = \bigoplus_{j \in \mathbb{N}_0+\frac{1}{2}} \bigoplus_{m_j = -j}^{j} \bigoplus_{\kappa_j= \pm(j+\frac{1}{2})} \Kilbert_{m_j\kappa_j} 
\end{equation}
with
\be\label{Kilbertdef}
\Kilbert_{m_j\kappa_j}=\mathrm{span}(\Phi^+_{m_j\kappa_j}, \Phi^-_{m_j\kappa_j})\,.
\ee
In this basis, we have that \cite[Lemma 4.13]{Tha91}
\be\label{alphaPhi}
\alpha_r = \begin{pmatrix}
0 & -i \\ 
i & 0
\end{pmatrix}, \quad \quad \beta = \begin{pmatrix}
1 & 0 \\ 
0 & -1
\end{pmatrix}, \quad \quad i\alpha_r \beta = \begin{pmatrix}
0 & -1 \\ 
-1 & 0
\end{pmatrix},
\ee
and we quote the following fact:

\begin{lem}\label{lem:H1Phi} \cite[Thm.~4.14]{Tha91}
$U$ as in \eqref{Udef} maps $C_c^\infty(\Q_1,\CCC^4)$ to
\be
\bigoplus_{j ,m_j,\kappa_j} C_c^\infty\bigl((0,\infty),\CCC\bigr) \otimes \Kilbert_{m_j\kappa_j}\,,
\ee
and $UH_1U^\dagger$ is block diagonal with blocks
\begin{equation}\label{hdef}
h_{m_j, \kappa_j} = 
\begin{pmatrix}
m+\frac{q}{r} & -\partial_r + \frac{\kappa_j}{r} \\[2mm] 
\partial_r + \frac{\kappa_j}{r} & -m + \frac{q}{r}
\end{pmatrix}\,.
\end{equation}
\end{lem}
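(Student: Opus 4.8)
The plan is to prove this exactly as \cite[Thm.~4.14]{Tha91}, by the partial-wave (angular-momentum) decomposition of the Dirac--Coulomb operator. First I would record the tensor-product picture: composing $U$ from \eqref{Udef} with the orthogonal decomposition $L^2(\SSS^2,\CCC^4)=\bigoplus_{j,m_j,\kappa_j}\Kilbert_{m_j\kappa_j}$ and \eqref{Kilbertdef} identifies $\Hilbert^{(1)}=L^2(\RRR^3,\CCC^4)$ with $\bigoplus_{j,m_j,\kappa_j} L^2\bigl((0,\infty),\CCC,dr\bigr)\otimes\Kilbert_{m_j\kappa_j}$. Everything then reduces to two verifications: (i)~that $U$ maps $C_c^\infty(\Q_1,\CCC^4)$ into the subspace $\bigoplus_{j,m_j,\kappa_j} C_c^\infty\bigl((0,\infty),\CCC\bigr)\otimes\Kilbert_{m_j\kappa_j}$, i.e.\ that the partial-wave components of $U\psi$ are smooth and compactly supported in $(0,\infty)$; and (ii)~that the expression for $UH_1U^\dagger$ recorded above respects this decomposition and acts as \eqref{hdef} on each summand.

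For (i): a $\psi\in C_c^\infty(\Q_1,\CCC^4)$ has support in a spherical shell $\{0<a\le|\vx|\le b\}$ for some $0<a<b$, so $(U\psi)(r,\cdot)=r\psi(r\,\cdot)$ vanishes for $r\notin[a,b]$ and is a smooth map of $r$ into $C^\infty(\SSS^2,\CCC^4)$. Hence every partial-wave coefficient $r\mapsto\braket{\Phi^\pm_{m_j\kappa_j}\vert(U\psi)(r,\cdot)}_{L^2(\SSS^2)}$ is smooth in $r$ (differentiate under the integral over the compact manifold $\SSS^2$) and supported in $[a,b]\subset(0,\infty)$, hence lies in $C_c^\infty\bigl((0,\infty),\CCC\bigr)$; expanding $U\psi$ against the orthonormal family $\{\Phi^\pm_{m_j\kappa_j}\}$ then exhibits it as an element of the claimed subspace.

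For (ii): start from the identity $UH_1U^\dagger=-i\alpha_r\bigl(\partial_r+\tfrac1r-\tfrac1r\beta K\bigr)+m\beta+\tfrac qr$ recorded above. The operators $\partial_r$, $1/r$, $q/r$ and the scalar $m$ act only on the radial factor, hence commute with the angular decomposition; $\alpha_r$, $\beta$, $K$ and therefore $\alpha_r\beta K$ act only on the angular factor, and each $\Kilbert_{m_j\kappa_j}$ is invariant under them --- it is an eigenspace of $K$ with eigenvalue $\kappa_j$, and \eqref{alphaPhi} displays $\alpha_r$, $\beta$, $i\alpha_r\beta$ as $2\times2$ matrices on $\mathrm{span}(\Phi^+_{m_j\kappa_j},\Phi^-_{m_j\kappa_j})$. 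Thus every summand $L^2\bigl((0,\infty)\bigr)\otimes\Kilbert_{m_j\kappa_j}$ is invariant, so $UH_1U^\dagger$ is block diagonal; inserting the matrices \eqref{alphaPhi} for $\alpha_r$, $\beta$, $i\alpha_r\beta$ and the scalar $\kappa_j$ for $K$, and simplifying the resulting $2\times2$ operator, gives exactly \eqref{hdef}. Together with (i) this is the assertion, the decomposition being respected by both the action of $H_1$ and its domain.

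I do not expect a genuine obstacle here: the argument is pure bookkeeping once the decomposition of $L^2(\SSS^2,\CCC^4)$ and the form of $UH_1U^\dagger$ are in hand. The only points needing a little care are the differentiation-under-the-integral step in (i), which is routine because $\SSS^2$ is compact and $\psi$ is smooth with support bounded away from $\vzero$, and keeping the sign and normalization conventions of \eqref{alphaPhi} (and of the $-\tfrac1r\beta K$ term) consistent, so that the final $2\times2$ block is literally \eqref{hdef} rather than a unitarily equivalent variant.
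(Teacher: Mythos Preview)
Your proposal is correct and follows the same approach as the paper, which simply cites \cite[Thm.~4.14]{Tha91} for this standard partial-wave decomposition without giving its own proof. Your sketch faithfully reproduces the argument from that reference: smoothness and compact support of the radial coefficients follow from the annular support of $\psi$, and block-diagonality follows from the invariance of each $\Kilbert_{m_j\kappa_j}$ under the angular operators together with the explicit matrices \eqref{alphaPhi}.
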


In the following, we will say ``angular momentum sector'' to $\Kilbert_{m_j\kappa_j}$. Our construction of the IBC Hamiltonian $H$ proceeds for each angular momentum sector separately. For one sector, the one chosen in \eqref{mjkappaj}, we will couple $h_{\tilde m_j \tilde\kappa_j}$ to the 0-particle sector of our mini-Fock space $\Hilbert$; all other angular momentum sectors will decouple. That is, $H$ will be block diagonal relative to the sum decomposition
\be\label{Hsummands}
\Hilbert \cong \widehat\Hilbert \oplus \bigoplus_{(j,m_j,\kappa_j) \neq (\tilde\jmath,\tilde m_j,\tilde \kappa_j)} L^2((0,\infty))\otimes \Kilbert_{m_j \kappa_j}
\ee
(note that $j$ is determined by $\kappa_j$ through $j=|\kappa_j|-\tfrac12$),
but not relative to
\be\label{Hilbertcouple}
\widehat\Hilbert=\Hilbert^{(0)}\oplus L^2((0,\infty)) \otimes \Kilbert_{\tilde m_j \tilde\kappa_j}\,.
\ee
To this end, we need a self-adjoint extension for every $h_{ m_j \kappa_j}$ with $(m_j,\kappa_j) \neq (\tilde m_j,\tilde \kappa_j)$.

\begin{lem}\label{lem:esa} \cite{Hog12}, \cite[Prop.s 1.2 and 2.2--2.4]{GM19}
The operator $(h_{m_j \kappa_j},C_c^\infty((0,\infty))\otimes \Kilbert_{m_j \kappa_j})$ is essentially self-adjoint if and only if $q^2\leq \kappa_j^2-\tfrac14$. As a consequence, for $\sqrt{3}/2 < |q| < 1$, the only angular momentum sectors for which $h_{m_j \kappa_j}$ is not essentially self-adjoint are those mentioned in \eqref{mjkappaj}. Furthermore, for $\sqrt{3}/2 < |q| < 1$, those sectors that are not essentially self-adjoint possess, among an infinitude of self-adjoint extensions, a distinguished one $h_{D m_j  \kappa_j}$ that is uniquely characterized by the property that for all functions $\phi$ in the domain, kinetic and potential energy are separately finite, $\| \, |h_{D m_j \kappa_j}-\tfrac{q}{r}|^{1/2}\phi\|<\infty$ and $\|(\tfrac{|q|}{r})^{1/2}\phi\|<\infty$. Functions $\phi$ in the domain of the distinguished extension obey the asymptotics
\be\label{Dasymp}
\phi(r,\vomega) = c_{+ m_j \kappa_j} \, f^+_{B m_j \kappa_j} (\vomega)\, r^B + o(r^{1/2})
\ee
as $r\searrow 0$.
\end{lem}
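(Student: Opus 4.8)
The plan is to regard each $h_{m_j\kappa_j}$ as a first-order ordinary differential operator of Dirac type on the half-line $(0,\infty)$, to decide essential self-adjointness via Weyl's limit-point/limit-circle alternative at the two singular endpoints $r=0$ and $r=\infty$, and to extract the finer information (the distinguished self-adjoint extension and the short-distance asymptotics) from a Frobenius analysis at the regular-singular point $r=0$. The heavy analytic work is already done in Hogreve \cite{Hog12} and Gallone--Michelangeli \cite{GM19}; what remains is to translate their statements into the present notation, to check the arithmetic behind the ``consequence,'' and to read off the angular profile $f^+_{Bm_j\kappa_j}$ and the remainder term in \eqref{Dasymp}.

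At $r=\infty$ the operator $h_{m_j\kappa_j}$ is in the limit-point case --- a classical fact for radial Dirac operators with bounded mass term $m\beta$ and a Coulomb-type potential, valid for every $m\geq 0$. Since the test functions in $C_c^\infty((0,\infty))\otimes\Kilbert_{m_j\kappa_j}$ have support away from both endpoints, essential self-adjointness is therefore equivalent to the limit-point case at $r=0$. There $h_{m_j\kappa_j}$ is a regular-singular system; dropping the bounded term $m\beta$, the homogeneous equation $h_{m_j\kappa_j}\phi=0$ has Frobenius solutions $\phi(r,\vomega)\sim\bigl(a_\pm\Phi^+_{m_j\kappa_j}(\vomega)+b_\pm\Phi^-_{m_j\kappa_j}(\vomega)\bigr)\,r^{\pm s}$, where $\pm s$ are the roots of the indicial equation $\det\begin{pmatrix}q&\kappa_j-s\\\kappa_j+s&q\end{pmatrix}=s^2-\kappa_j^2+q^2=0$, i.e.\ $s=\sqrt{\kappa_j^2-q^2}$, and $(a_\pm,b_\pm)$ is the kernel vector of the corresponding indicial matrix. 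A solution behaving like $r^\sigma$ near $0$ lies in $L^2((0,1),dr)$ iff $\Re\sigma>-\tfrac12$. In the range of interest $|q|<1\leq|\kappa_j|$, so $s>0$ is real: $r^{s}$ is always square-integrable at $0$, whereas $r^{-s}$ is square-integrable iff $s<\tfrac12$. Hence both solutions are $L^2$ at $0$ --- the limit-circle case, deficiency indices $(1,1)$, not essentially self-adjoint --- exactly when $s<\tfrac12$, i.e.\ $q^2>\kappa_j^2-\tfrac14$, and exactly one solution is $L^2$ --- the limit-point case, essentially self-adjoint --- exactly when $q^2\leq\kappa_j^2-\tfrac14$. (In the remaining regime $q^2\geq\kappa_j^2$, $s$ is purely imaginary or $0$, both solutions stay bounded at $0$ hence lie in $L^2$, again the limit-circle case and again consistent with the criterion.) This is the content of \cite{Hog12}, \cite[Prop.~1.2]{GM19}. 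The ``consequence'' is then arithmetic: for $\sqrt3/2<|q|<1$ one has $q^2\in(\tfrac34,1)$, while $\kappa_j^2-\tfrac14$ equals $\tfrac34$ for $|\kappa_j|=1$ and is at least $\tfrac{15}4$ for $|\kappa_j|\geq2$, so $q^2>\kappa_j^2-\tfrac14$ precisely for $|\kappa_j|=1$, i.e.\ $j=\tfrac12$, which are exactly the four pairs listed in \eqref{mjkappaj}. (Incidentally this also recovers Theorem~\ref{thm:weakCoulomb}: $H_1$ is essentially self-adjoint iff every channel is, iff the worst channel $|\kappa_j|=1$ is, iff $|q|\leq\sqrt3/2$.)

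For those four sectors, now with $\sqrt3/2<|q|<1$ and hence $B:=\sqrt{1-q^2}=s\in(0,\tfrac12)$, the deficiency indices are $(1,1)$, so the self-adjoint extensions form a real one-parameter family. Every $\phi$ in the adjoint domain has the expansion $\phi(r,\vomega)=c_-\,f^-_{Bm_j\kappa_j}(\vomega)\,r^{-B}+c_+\,f^+_{Bm_j\kappa_j}(\vomega)\,r^{B}+o(r^{1/2})$, the two leading vectors being the Frobenius kernel vectors found above (undoing the isomorphism $U$, i.e.\ dividing by $r$, turns $r^{\pm B}$ into $|\vx|^{-1\pm B}$ and produces the explicit functions $f^\pm$ of \eqref{fdef}), and each self-adjoint extension imposes one linear relation between $c_-$ and $c_+$. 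Since $\|(\tfrac{|q|}{r})^{1/2}\phi\|^2=\int_0|q|\,r^{-1}|\phi|^2\,dr$, with the $r^{B}$ part contributing $\int_0 r^{-1+2B}\,dr<\infty$ and the $r^{-B}$ part contributing $\int_0 r^{-1-2B}\,dr=\infty$, finite potential energy forces $c_-=0$; conversely on the subspace $\{c_-=0\}$ the kinetic energy $\||h_{m_j\kappa_j}-\tfrac{q}{r}|^{1/2}\phi\|$ is finite as well. Exactly one member of the one-parameter family has its domain contained in $\{c_-=0\}$, so the two separate-finiteness conditions single it out uniquely; this is $h_{Dm_j\kappa_j}$, and its domain elements obey $\phi(r,\vomega)=c_{+m_j\kappa_j}\,f^+_{Bm_j\kappa_j}(\vomega)\,r^{B}+o(r^{1/2})$, which is \eqref{Dasymp}. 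Existence, uniqueness, and this characterization of the distinguished extension are exactly \cite[Prop.s~2.2--2.4]{GM19}.

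I expect the main obstacle to be the asymptotic expansion with a \emph{genuine} $o(r^{1/2})$ remainder: one must verify that, beyond the two leading powers $r^{\pm B}$ (with $|B|<\tfrac12$), every further contribution --- the higher Frobenius terms $r^{\pm B+k}$, the corrections induced by the bounded term $m\beta$, and the particular solution arising from $h_{m_j\kappa_j}\phi=\eta$ with $\eta\in L^2$ --- is indeed $o(r^{1/2})$, uniformly on $\SSS^2$; and, on the operator-theoretic side, that the joint finiteness of kinetic and potential energy really selects a single member of the family rather than merely constraining it. Both are precisely where the detailed estimates of \cite{GM19} (and \cite{Hog12}) do the real work.
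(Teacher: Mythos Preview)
The paper does not give its own proof of this lemma; it is quoted as a known result with citations to \cite{Hog12} and \cite[Prop.s~1.2 and 2.2--2.4]{GM19}, and is used as a black box inside the proof of Theorem~\ref{thm:sa}. Your sketch correctly reconstructs the standard argument behind those references --- Weyl's limit-point/limit-circle alternative at the endpoints, the Frobenius/indicial analysis at $r=0$ giving exponents $\pm\sqrt{\kappa_j^2-q^2}$, the $L^2$ threshold at exponent $-\tfrac12$, and the selection of the distinguished extension by finiteness of $\|(\tfrac{|q|}{r})^{1/2}\phi\|$ --- and your identification of the $o(r^{1/2})$ remainder as the place where the cited references do the real analytic work is accurate. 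In short, there is nothing to compare against in the paper itself, and your outline matches the approach of the sources the paper invokes.
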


For the three sectors in $\sA$ mentioned in \eqref{mjkappaj} but different from $(\tilde m_j,\tilde\kappa_j)$, we choose the distinguished extension, and for those not mentioned there (i.e., with $|\kappa_j|\geq 2$), the extension is unique. These extensions can be combined to form an extension of $H^\circ$ on all summands but the first in \eqref{Hsummands}. So it remains to construct the block $\widehat H$ of $H$ acting in $\widehat\Hilbert$, which is where the coupling between particle sectors takes place. Correspondingly, 
\be
D:= \widehat D \oplus D_\mathrm{distinguished} \subset \widehat \Hilbert \oplus \widehat\Hilbert\:{}^\perp = \Hilbert
\ee
with $\widehat D$ to be determined. For brevity, we set
\be
\Kilbert:= \Kilbert_{\tilde m_j \tilde\kappa_j} \text{ and } h:= h_{\tilde m_j \tilde\kappa_j}\,.
\ee
We regard $h$ as defined on $C_c^\infty((0,\infty))\otimes \Kilbert$. Another known fact:

\begin{lem}\label{lem:h*} \cite[Thm.~2.6]{GM19}
Every function $\phi$ in the domain $D(h^*)$ of the adjoint $h^*$  of $h$ obeys, as $r\searrow 0$, the asymptotics\footnote{Note that the meaning of symbols such as $\psi^{(1)}$ differs by a factor of $r$ (coming from \eqref{Udef}) depending on whether we consider the left or the right-hand side of the isomorphism \eqref{Hsummands}. This is why the exponents are $\pm B$ in \eqref{asymp} but $-1\pm B$ in \eqref{short}.}
\be\label{asymp}
\phi(r,\vomega) = c_-\, f^{-}_{\tilde m_j \tilde\kappa_j}(\vomega)\, r^{-B} +  c_{+}\, f^+_{\tilde m_j \tilde\kappa_j}(\vomega) \, r^B + o(r^{1/2})
\ee
with $f^{\pm}$ as in \eqref{fdef}; the coefficients $c_-,c_+=c_{+ \tilde m_j \tilde\kappa_j} \in \CCC$ are uniquely determined by $\phi$, and all combinations $(c_-,c_{+})\in\CCC^2$ occur for some $\phi$.
\end{lem}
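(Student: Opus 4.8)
\emph{Proposal for the proof of Lemma~\ref{lem:h*}.}
The plan is to realise $h^*$ as the \emph{maximal operator} attached to the first-order differential expression in \eqref{hdef} on the half-line, and then to read off the behaviour of its elements as $r\searrow0$ from a Frobenius analysis at the regular singular point $r=0$ followed by a variation-of-parameters estimate. Throughout I identify $\Kilbert=\Kilbert_{\tilde m_j\tilde\kappa_j}$ with $\CCC^2$ via the ordered basis $(\Phi^+_{\tilde m_j\tilde\kappa_j},\Phi^-_{\tilde m_j\tilde\kappa_j})$, so that $h$ is a formally symmetric operator on $C_c^\infty((0,\infty),\CCC^2)$; I write $\kappa=\tilde\kappa_j$, use $\kappa^2=1$, and recall $B=\sqrt{1-q^2}=\sqrt{\kappa^2-q^2}$. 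The structural observation to make at the outset is that the hypothesis $\sqrt3/2<|q|<1$ is \emph{equivalent} to $0<B<\tfrac12$, and this is what makes the argument run. First I would record the standard fact (obtained by integration by parts against test functions, or from Weyl--Naimark theory for first-order systems) that $D(h^*)=\{\phi\in L^2((0,\infty),\CCC^2):\phi$ has locally absolutely continuous components and $\tau\phi\in L^2\}$, where $\tau$ is the expression in \eqref{hdef} and $h^*\phi=\tau\phi$. Hence, fixing $\phi\in D(h^*)$ and putting $\chi:=h^*\phi\in L^2$, the restriction of $\phi$ to a small punctured interval $(0,r_0)$ is an (a.e., classical) solution of $\tau\phi=\chi$, which in first-order form reads $\phi'=\tfrac1r C\phi+R\phi+S$ with $C=\bigl(\begin{smallmatrix}-\kappa&-q\\q&\kappa\end{smallmatrix}\bigr)$, $R=\bigl(\begin{smallmatrix}0&m\\m&0\end{smallmatrix}\bigr)$ a constant matrix, and $S=(\chi_2,-\chi_1)^{\mathrm T}\in L^2(0,r_0)$. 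The eigenvalues of $C$ are $\pm\sqrt{\kappa^2-q^2}=\pm B$, so $r=0$ is a regular singular point with characteristic exponents $\pm B$.

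Next I would build two exact homogeneous solutions by the Frobenius method. Since $2B\in(0,1)$, the exponents $B$ and $-B$ do not differ by a nonnegative integer, so there are solutions $\phi^{\pm}$ of $\phi'=\tfrac1r C\phi+R\phi$ on $(0,r_0)$ of the form $\phi^{\pm}(r)=r^{\pm B}\sum_{n\ge0}v^{\pm}_n r^n$ with convergent series and no logarithmic terms, where $v^{\pm}_0$ is an eigenvector of $C$ for $\pm B$. A direct linear-algebra computation (using the conventions of \cite{Tha91} for $\Phi^{\pm}_{m_j\kappa_j}$) identifies $v^{\pm}_0$, up to a scalar, with the vectors $f^{\pm}=f^{\pm}_{\tilde m_j\tilde\kappa_j}$ of \eqref{fdef}, whence $\phi^{\pm}(r)=f^{\pm}r^{\pm B}+O(r^{1\pm B})$; since $0<B<\tfrac12$ gives $1-B>\tfrac12$, the remainders are $o(r^{1/2})$. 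Moreover $\operatorname{tr}(\tfrac1r C+R)=0$, so the Wronskian $\det(\phi^+\mid\phi^-)$ is a nonzero constant and $\{\phi^+,\phi^-\}$ is a fundamental system on $(0,r_0)$.

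The heart of the proof is variation of parameters with a carefully chosen particular solution. With $\mathcal M=(\phi^+\mid\phi^-)$, every solution of the inhomogeneous system on $(0,r_0)$ equals $c_-\phi^-+c_+\phi^++\phi_{\mathrm{part}}$, where $\phi_{\mathrm{part}}(r):=\phi^+(r)\int_0^r g_+ + \phi^-(r)\int_0^r g_-$ and $(g_+,g_-)^{\mathrm T}=\mathcal M^{-1}S$. From the asymptotics of $\phi^{\pm}$ and constancy of the Wronskian one gets $|g_+(t)|\le C\,t^{-B}|S(t)|$ and $|g_-(t)|\le C\,t^{B}|S(t)|$ near $0$, and Cauchy--Schwarz gives $\int_0^r|g_+|\le C\|t^{-B}\|_{L^2(0,r)}\|S\|_{L^2(0,r)}=o(r^{1/2-B})$ --- here $\|t^{-B}\|_{L^2(0,r)}^2=r^{1-2B}/(1-2B)<\infty$ precisely because $2B<1$ --- and likewise $\int_0^r|g_-|=o(r^{1/2+B})$. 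Hence $\phi^+(r)\int_0^r g_+=O(r^{B})\,o(r^{1/2-B})=o(r^{1/2})$ and $\phi^-(r)\int_0^r g_-=O(r^{-B})\,o(r^{1/2+B})=o(r^{1/2})$, so $\phi_{\mathrm{part}}(r)=o(r^{1/2})$; combining with $\phi^{\pm}=f^{\pm}r^{\pm B}+o(r^{1/2})$ yields \eqref{asymp}. Uniqueness of $(c_-,c_+)$ is then immediate from the strictly decreasing orders $r^{-B}$, $r^{B}$, $o(r^{1/2})$ at $0$ (valid since $-B<B<\tfrac12$) together with linear independence of $f^+,f^-$ (one recovers $c_-$, $c_+$ by projecting $r^{B}\phi(r)$, $r^{-B}\phi(r)$ suitably and letting $r\searrow0$). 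Surjectivity follows because, for any $(c_-,c_+)\in\CCC^2$, the function $\zeta(r)\bigl(c_-\phi^-(r)+c_+\phi^+(r)\bigr)$ with $\zeta\in C_c^\infty([0,\infty))$ equal to $1$ near $0$ lies in $D(h^*)$ (it is in $L^2$ since $r^{-2B}$ is integrable near $0$, its components are locally absolutely continuous, and $\tau$ applied to it is supported where $\zeta'\ne0$, hence in $L^2$) and realises those coefficients.

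I expect the main obstacle to be exactly the bound $\phi_{\mathrm{part}}=o(r^{1/2})$: one must choose the particular solution so that it is genuinely of lower order than \emph{both} $r^{-B}$ and $r^{B}$, and the estimates hinge on the two facts that $r^{\pm B}\in L^2$ near $0$ and that $t^{-B}|S(t)|\in L^1$ near $0$ --- each of which is precisely the condition $0<B<\tfrac12$, i.e.\ $\sqrt3/2<|q|<1$; using only the naive fundamental matrix $\operatorname{diag}(r^{B},r^{-B})$ in place of the exact $\phi^{\pm}$ produces spurious lower-order terms and this must be avoided. A subsidiary technical point, easy to underestimate, is the explicit identification of the leading Frobenius vectors with the $f^{\pm}$ of \eqref{fdef} uniformly over the four sectors of $\sA$, which requires being careful with the angular-momentum conventions for the $\Phi^{\pm}_{m_j\kappa_j}$.
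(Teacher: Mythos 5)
The paper does not prove Lemma~\ref{lem:h*} at all; it is quoted verbatim as \cite[Thm.~2.6]{GM19} and used as a black box in the proof of Theorem~\ref{thm:sa}. You instead supply a self-contained proof via the classical ODE route: realise $h^*$ as the maximal operator of the radial Dirac--Coulomb expression, do a Frobenius analysis at the regular singular point $r=0$ with indicial roots $\pm B$, and control the inhomogeneity by a variation-of-parameters estimate that crucially exploits $0<B<\tfrac12$ (so $r^{-B}\in L^2$ near $0$ and $t^{-B}|S(t)|\in L^1$ by Cauchy--Schwarz). This is sound and, as far as I can tell, essentially the same kind of argument that Gallone--Michelangeli use internally; the gain of writing it out is that the role of the window $\sqrt3/2<|q|<1$ becomes completely transparent (both indicial solutions square-integrable at $0$, particular solution pushed below $o(r^{1/2})$), and your remark that using the naive diagonal fundamental matrix $\operatorname{diag}(r^B,r^{-B})$ would produce spurious intermediate-order terms is a genuine and easy-to-miss pitfall. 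Your estimates $\int_0^r g_+=o(r^{1/2-B})$, $\int_0^r g_-=o(r^{1/2+B})$, the Wronskian constancy from $\operatorname{tr}(C/r+R)=0$, the uniqueness from the strict ordering $r^{-B}\gg r^{B}\gg r^{1/2}$, and the surjectivity via a cutoff of the homogeneous solutions are all correct.

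One point deserves more than the footnote you give it. For eigenvalue $+B$ the eigenvector of $C=\bigl(\begin{smallmatrix}-\kappa&-q\\ q&\kappa\end{smallmatrix}\bigr)$ is proportional to $\bigl(q,\,-(\kappa+B)\bigr)$. When $\kappa=+1$ this is proportional to $f^+$ of \eqref{f+def} (check: $q(1+q+B)=(1+B)(1+q-B)$, both equal $q+q^2+qB$). When $\kappa=-1$, however, the eigenvector is $\bigl(q,\,1-B\bigr)$, which is \emph{not} proportional to $(1+q-B,\,-(1+q+B))$ for generic $q$; the formulas \eqref{fdef} would need $\kappa_j$ in place of the explicit $1$'s (i.e.\ $f^+=(\kappa_j+q-B)\Phi^+-(\kappa_j+q+B)\Phi^-$) to describe the leading Frobenius vector uniformly over the four sectors in \eqref{mjkappaj}. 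You flag this as ``a subsidiary technical point'' requiring care with conventions, which is fair; but it is not merely a convention issue, and a complete proof should either carry the $\kappa_j$-dependence through \eqref{fdef} or restrict to $\tilde\kappa_j=1$. This does not affect the validity of your overall strategy.
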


For higher angular momentum sectors ($|\kappa_j|\geq 2$), functions in the domain of the unique self-adjoint extension of $h_{m_j\kappa_j}$ are $o(r^{1/2})$ (because the extension is $\overline{h_{m_j\kappa_j}}$, and the domain of that contains only $o(r^{1/2})$ functions by \cite[Prop.~2.4]{GM19}) and thus contribute only $o(|\vx|^{-1/2})$ in \eqref{short}.

We will directly consider the more general form \eqref{IBCgen} of the IBC, which contains \eqref{IBC} as a special case. We define the domain $\widehat D$ of $\widehat H$ as follows, containing functions satisfying the IBC \eqref{IBCgen}:\be
\widehat D:= \Bigl\{ (\psi^{(0)},\psi^{(1)})\in \widehat\Hilbert: \psi^{(1)}\in D(h^*) \text{ and } \eqref{IBCgen} \Bigr\}\,.
\ee
We define the action of $\widehat H$ (according to \eqref{Hact1} and \eqref{Hact0gen}) as 
\begin{subequations}\label{hatHdef}
\begin{align}
(\widehat H\psi)^{(0)}&= g^* \,(a_3\, c_- + a_4 \, c_{+ \tilde m_j \tilde\kappa_j}) \\
(\widehat H\psi)^{(1)}&=h^* \psi^{(1)}\,.
\end{align}
\end{subequations}

$\widehat D$ is dense in $\widehat\Hilbert$ because (i)~in $D(h^*)$ there exist functions with arbitrary values of $c_-$ and $c_{+}$, so for $\psi\in\widehat D$ any desired complex number can occur as $\psi^{(0)}$; (ii)~since $D(h)=C_c^\infty((0,\infty))\otimes \Kilbert$ is contained in $D(h^*)$ and dense in $L^2((0,\infty))\otimes \Kilbert$, the set $\{(\psi^{(0)},\psi^{(1)}+\phi):\phi\in D(h)\} \subset \widehat D$ is dense in $\{\psi^{(0)}\}\oplus L^2((0,\infty))\otimes \Kilbert$. Together, (i) and (ii) imply that $\widehat D$ is dense.

We now prove that $\widehat H$ is a symmetric operator on $\widehat D$. 
Let $\psi \in \widehat D$ with short distance coefficients $c_-$ and $c_+$, and $ \phi \in \widehat D$ with short distance coefficients $d_-$ and $d_+$. Then
\begin{align}
&\braket{\phi, H\psi}_{\widehat\Hilbert} - \braket{H\phi, \psi}_{\widehat\Hilbert} \nonumber\\
= & \braket{\phi^{(0)}, g^* \,(a_3\, c_- + a_4 \, c_{+})}_{\mathbb{C}}+\braket{\phi^{(1)}, h^* \psi^{(1)}}_{L^2((0,\infty),\Kilbert)} \nonumber\\ 
&- \braket{g^* \,(a_3\, d_- + a_4 \, d_+),\psi^{(0)} }_{\mathbb{C}}
- \braket{h^* \phi^{(1)}, \psi^{(1)}}_{L^2((0,\infty),\Kilbert)} \\ 
= &\: \braket{a_1 d_-+a_2 d_+, a_3\, c_- + a_4 \, c_{+}}_{\mathbb{C}} - \braket{a_3 d_-+ a_4 d_+, a_1 c_- + a_2 c_+}_{\mathbb{C}} \nonumber\\ 
&+ \braket{\phi^{(1)}, h^* \psi^{(1)}}_{L^2((0,\infty),\Kilbert)}
- \braket{h^* \phi^{(1)}, \psi^{(1)}}_{L^2((0,\infty),\Kilbert)}~.\label{Hsym1}
\end{align}
Now note that $h^*$ can be written as $h^* = -i\alpha_r \partial_r + M(r)$, where $M(r)$ is an Hermitian operator $\Kilbert\to\Kilbert$ for each $r>0$. Thus, the last two terms of \eqref{Hsym1} can be written as
\begin{align}
&\braket{\phi^{(1)}, h^* \psi^{(1)}}_{L^2((0,\infty),\Kilbert)}
- \braket{h^* \phi^{(1)}, \psi^{(1)}}_{L^2((0,\infty),\Kilbert)} \nonumber\\
=& \int_{0}^{\infty} dr \Bigl[ \braket{\phi^{(1)}(r), (-i\alpha_r) \partial_r \psi^{(1)}(r)}_\Kilbert+ \braket{\partial_r\phi^{(1)}(r), (-i\alpha_r)\psi^{(1)}(r)}_\Kilbert\Bigr] \\ 
=& \int_{0}^{\infty} dr \: \partial_r \braket{\phi^{(1)}(r), (-i\alpha_r) \psi^{(1)}(r)}_\Kilbert \\ 
=& ~\lim_{r\searrow 0} \braket{\phi^{(1)}(r), (+i\alpha_r) \psi^{(1)}(r)}_\Kilbert\label{Hsym1b}\\
=& ~\lim_{r\searrow 0} \braket{d_- f^- r^{-B}+d_+ f^+ r^B + o(r^{1/2}),  c_- (i\alpha_r) f^- r^{-B}+c_+ (i\alpha_r)f^+ r^B + o(r^{1/2})}_\Kilbert\\
=& ~\lim_{r\searrow 0} \Bigl[d_-^* c_- \braket{f^-,i\alpha_rf^-}_\Kilbert r^{-2B} \nonumber\\
&+ \Bigl(d^*_- c_+  \braket{f^-,i\alpha_rf^+}_\Kilbert
+ d^*_+ c_- \braket{f^+,i\alpha_rf^-}_\Kilbert\Bigr) r^0 +  o(r^0)\Bigr]\,.\label{Hsym2}
\end{align}
Since in the orthonormal basis $\Phi^{\pm}=\Phi^{\pm}_{\tilde m_j \tilde\kappa_j}$ of $\Kilbert$, the explicit form of $\alpha_r$ is given by \eqref{alphaPhi}, any vector $v=x \Phi^+ + y \Phi^-\in \Kilbert$ with coefficients $x,y\in\CCC$ has
\be
\braket{v,i\alpha_r v}_\Kilbert = (x^*~~y^*) \begin{pmatrix} 0&1\\-1&0 \end{pmatrix} \begin{pmatrix} x\\y \end{pmatrix} = x^*y-y^*x = 2\, \Im(x^*y)\,.
\ee
Since by \eqref{fdef}, $f^{\pm}$ has real coefficients relative to $\Phi^{\pm}$, we have that
\be\label{falphaf}
\braket{f^{\pm},i\alpha_r f^{\pm}}_\Kilbert=0.
\ee
Thus, the $r^{-2B}$ term in \eqref{Hsym2} vanishes. Moreover, by \eqref{fdef} and \eqref{alphaPhi} again,
\be\label{f-alphaf+}
\braket{f^-,i\alpha_r f^+}_\Kilbert= -4B(1+q)\,,~~~~~~
\braket{f^+,i\alpha_r f^-}_\Kilbert=4B(1+q)\,.
\ee
Thus,
\begin{align}
&\braket{\phi^{(1)}, h^* \psi^{(1)}}_{L^2((0,\infty),\Kilbert)}
- \braket{h^* \phi^{(1)}, \psi^{(1)}}_{L^2((0,\infty),\Kilbert)} \nonumber\\
&= d^*_+ c_-  4B(1+q)- d^*_- c_+ 4B(1+q)\,.\label{Hsym3}
\end{align}
And therefore, putting together \eqref{Hsym1} and \eqref{Hsym3},
\begin{align}
\braket{\phi, H\psi}_{\Hilbert} - \braket{H\phi, \psi}_{\Hilbert} 
&= a_1 a_3 d^*_- c_- +a_2 a_3 d^*_+ c_- + a_1 a_4 d^*_- c_+ + a_2 a_4 d^*_+ c_+ \nonumber\\[2mm]
&~~~- a_1 a_3 d^*_- c_- - a_2 a_3 d^*_- c_+ - a_1 a_4 d^*_+ c_- - a_2 a_4 d^*_+ c_+ \nonumber\\
&~~~+ \Bigl( d^*_+ c_- - d^*_- c_+\Bigr) 4B(1+q)\\
&= (d^*_+c_- - d^*_- c_+) (-a_1 a_4 + a_2 a_3 + 4B(1+q))\\[2mm]
&=0\label{lastsymmetry}
\end{align}
if \eqref{acond} holds.
	
In order to see that $\widehat H$ is also self-adjoint, it remains to verify that $\widehat D= D\bigl( {\widehat{H}}^*\bigr)$. To this end, we first note that
\begin{equation}
\widehat D\subseteq D\bigl({\widehat{H}}^*\bigr) \subseteq \mathbb{C} \oplus D(h^*)
\end{equation}
Any given $\phi \in \mathbb{C} \oplus D(h^*)$ lies in $D\bigl({\widehat{H}}^*\bigr)$ if and only if there is $\eta \in \widehat\Hilbert$ such that for every $\psi \in \widehat D$,
\begin{equation}\label{domcond}
 \braket{\eta, \psi}_{\widehat\Hilbert} = \braket{\phi, \widehat H\psi}_{\widehat\Hilbert}.
\end{equation}
If we write again $c_-,c_+$ for the short-distance coefficients of $\psi^{(1)}$ and $d_-,d_+$ for those of $\phi^{(1)}$, then the condition \eqref{domcond} is equivalent to
\begin{align}
\braket{\eta,\psi}_{\widehat{\Hilbert}}
&~~~=~~~ \braket{\phi^{(0)},(\widehat H\psi)^{(0)}}_{\CCC} + \braket{\phi^{(1)},h^*\psi^{(1)}}_{L^2((0,\infty),\Kilbert)}\\
&\stackrel{\eqref{Hact0gen},\eqref{Hsym3}}{=} \phi^{(0)*} \, g^*(a_3 c_-+a_4 c_+)\nonumber\\
&~~~~~~~~~+\braket{h^*\phi^{(1)},\psi^{(1)}}_{L^2((0,\infty),\Kilbert)}-(d^*_-c_+-d^*_+ c_-)4B(1+q)\\
&~~\stackrel{\eqref{acond}}{=}~~ \bigl[g\phi^{(0)} \bigr]^*(a_3 c_-+a_4 c_+)\nonumber\\
&~~~~~~~~~+\braket{h^*\phi^{(1)},\psi^{(1)}}_{L^2((0,\infty),\Kilbert)}-
\begin{vmatrix}d^*_-&d^*_+\\ c_-&c_{+}\end{vmatrix} \:
\begin{vmatrix}a_1&a_3\\a_2&a_4\end{vmatrix}\\
&~~~=~~~ \bigl[g\phi^{(0)} \bigr]^*(a_3 c_-+a_4 c_+)\nonumber\\
&~~~~~~~~~+\braket{h^*\phi^{(1)},\psi^{(1)}}_{L^2((0,\infty),\Kilbert)}-
\begin{vmatrix}d^*_-a_1+d^*_+ a_2 & d^*_-a_3+d^*_+ a_4\\
c_-a_1+c_+ a_2 & c_-a_3+c_+ a_4\end{vmatrix}\\
&~~~=~~~ \Bigl[-(a_1d_-+a_2d_+)+g\phi^{(0)} \Bigr]^* (a_3 c_-+a_4 c_+)\nonumber\\
&~~~~~~~~~+\braket{h^*\phi^{(1)},\psi^{(1)}}_{L^2((0,\infty),\Kilbert)}+
\bigl(a_3 d^*_-+a_4 d^*_+ \bigr) \bigr(a_1 c_-+a_2 c_+  \bigr)\\
&~~~=~~~ \Bigl[-(a_1d_-+a_2d_+)+g\phi^{(0)} \Bigr]^* (a_3 c_-+a_4 c_+)\nonumber\\
&~~~~~~~~~+\braket{h^*\phi^{(1)},\psi^{(1)}}_{L^2((0,\infty),\Kilbert)}+\braket{g^*(a_3 d_-+a_4 d_+) ,\psi^{(0)}}_{\CCC}~.\label{lastsa}
\end{align}
The only way this can be true for all $\psi\in \widehat D$ is that \begin{subequations}
\begin{align}
\eta^{(0)}&=g^*(a_3d_-+a_4d_+),\\
\eta^{(1)}&=h^* \phi^{(1)},
\end{align}
\end{subequations}
and the product in the first line of \eqref{lastsa} vanishes. For one thing, we obtain from this an expression for ${\widehat{H}}^*\phi=\eta$. Moreover, since the product in the first line of \eqref{lastsa} consists of one factor depending on $\phi$ and one depending on $\psi$, and since the product needs to vanish for every $\psi\in \widehat D$ but the second factor will not, the first factor has to vanish for every $\phi\in D({\widehat H}^*)$.
As a consequence, $\phi$ also needs to satisfy the IBC \eqref{IBCgen}, i.e.,
\begin{equation}
	a_1 \, d_-+a_2 \, d_+ = g \phi^{(0)},
\end{equation}
so that we arrive at
\begin{equation}
	\widehat D = D\bigl({\widehat{H}}^*\bigr)
\end{equation}
and $\widehat H$ (and thus $H$) is self-adjoint. This completes the proof of Theorem~\ref{thm:sa}.
\end{proof}

\begin{rmk}
Here is an alternative argument for the part after \eqref{lastsymmetry}, after it has been shown that $H$ is symmetric. While this alternative argument does not show that $H$ is self-adjoint, it shows that $H$ possesses a self-adjoint extension, which suffices for proving Theorem~\ref{thm:sa}. The argument is based on the von Neumann theorem about conjugations \cite[Thm.~X.3]{reedsimon2}, which asserts the following: {\it Let $\Hilbert$ be a Hilbert space. An antilinear map $C: \Hilbert \to \Hilbert$ (i.e., such that $C(\alpha \phi + \beta \psi) = \alpha^* C \phi + \beta^* C \psi$) is called a conjugation if it is norm-preserving and $C^2 = I$. Let $(A,D(A))$ be a densely defined symmetric operator and suppose that there exists a conjugation $C$ with $C(D(A))\subseteq D(A)$ and $AC=CA$. Then $A$ has equal deficiency indices and therefore has self-adjoint extensions.}

In our case, we use that the differential expression \eqref{hdef} for $h$ has only real entries. We assume for the argument that the coupling constant $g$ is real; if this is not the case, it can be arranged through a unitary transformation of $\Hilbert$ that merely changes the phase of $\psi^{(0)}$ by the phase of $g$. So, we take $C:\widehat\Hilbert\to \widehat\Hilbert$ to be complex conjugation of $\psi^{(0)}$ and of the coefficients of $\psi^{(1)}\in L^2((0,\infty))\otimes \Kilbert$ relative to $\Phi^{\pm}_{\tilde m_j \tilde\kappa_j}$. Since the coefficients $a_1,\ldots,a_4$ are real, $H$ commutes with $C$, and the von Neumann theorem applies.
\end{rmk}

\begin{proof}[Proof of Lemma~\ref{lem:rotation}]
Suppose $\psi$ was a nonzero vector in $L^2(\RRR^3,\CCC^4)$ invariant under the representation of the covering group of $SO(3)$. 
A rotation (about any axis) through 360 degrees is a particular element $g$ of the covering group that acts on vectors $\vx$ in $\RRR^3$ as the identity and on spinors as $-I$ with $I$ the identity. Thus, $g$ maps $\psi$ to $-\psi$, and if $\psi$ is invariant, it must vanish.
\end{proof}

\begin{proof}[Proof of Theorem~\ref{thm:rotation}]
If $H$ were a self-adjoint extension of $H^\circ$ that involves particle creation, then the initial Fock vector $\psi_0=(\psi_0^{(0)},\psi_0^{(1)})=(1,0)$ (i.e., the Fock vacuum) would evolve by some time $t>0$ to a non-vacuum state, i.e., one with nonzero $\psi_t^{(1)}$, that would be invariant under rotations. Since rotations do not mix particle number sectors, also $\psi_t^{(1)}$ by itself would be invariant under rotations. However, by Lemma~\ref{lem:rotation} no such state exists. 
\end{proof}

\bigskip

\noindent\textit{Acknowledgments.} We thank Stefan Teufel for helpful discussion and advice. J.H.~gratefully acknowledges partial financial support by the ERC Advanced Grant ``RMTBeyond" No.~101020331.


\begin{thebibliography}{29}

\bibitem{AGHKH88} S.~Albeverio, F.~Gesztesy, R.~H\o egh-Krohn, and H.~Holden:
	\textit{Solvable models in quantum mechanics.}
	Berlin: Springer (1988)

\bibitem{BP35} H.~Bethe and R.~Peierls:
	Quantum Theory of the Diplon.
	\textit{Proceedings of the Royal Society of London A} \textbf{148}: 146--156 (1935)
	
	\bibitem{BL21} T.~Binz and J.~Lampart. An abstract framework for interior-boundary conditions. Preprint (2021) \url{http://arxiv.org/abs/2103.17124}

\bibitem{DFT08} G.~Dell'Antonio, R.~Figari, and A.~Teta:
	A brief review on point interactions.
	Pages 171--189 in \textit{Inverse problems and imaging},
	Lecture Notes in Mathematics \textbf{1943}, Berlin: Springer (2008)

\bibitem{Der03} J. Derezi\'nski: 
	Van Hove Hamiltonians---exactly solvable models of the infrared and ultraviolet problem.
	\textit{Annales Henri Poincar\'e} \textbf{4}: 713--738 (2003)

\bibitem{bohmibc} D. D\"urr, S. Goldstein, S. Teufel, R. Tumulka, and N. Zangh\`\i: 
	Bohmian Trajectories for Hamiltonians with Interior--Boundary Conditions. 
	{\it Journal of Statistical Physics} \textbf{180}: 34--73 (2020)
	\url{http://arxiv.org/abs/1809.10235}

\bibitem{ELS19} M. J. Esteban, M. Lewin, and E. S\'er\'e: 
	Domains for Dirac-Coulomb min-max levels. 
	{\it Revista Matem\'atica Iberoamericana} {\bf 35(3)}: 877--924 (2019)

\bibitem{Gal17} M. Gallone: 
	Self-adjoint extensions of Dirac operator with Coulomb potential. 
	Pages 169--185 in G.~Dell'Antonio and A.~Michelangeli (editors), 
	{\it Advances in Quantum Mechanics}, Springer (2017)
	\url{http://arxiv.org/abs/1710.02069}

\bibitem{GM19} M. Gallone and A. Michelangeli:
	Self-adjoint realisations of the Dirac--Coulomb Hamiltonian for heavy nuclei. 
	{\it Analysis and Mathematical Physics} {\bf 9}: 585--616 (2019)
	\url{http://arxiv.org/abs/1706.00700}

\bibitem{Gal16} B. Galvan:
	Quantum field theory without divergence: the method of the interaction operators.
	Preprint (2016)
	\url{http://arxiv.org/abs/1607.03876}

\bibitem{Tum04} H.-O. Georgii and R. Tumulka:
	Some Jump Processes in Quantum Field Theory.
	    Pages 55--73 in J.-D. Deuschel and A. Greven (editors),
    \textit{Interacting Stochastic Systems}, Berlin: Springer-Verlag (2004). 
    \url{http://arxiv.org/abs/math.PR/0312326}

\bibitem{GJ85} J. Glimm and A. Jaffe: 
	{\it Quantum Field Theory and Statistical Mechanics.} 
	Basel: Birkh\"auser (1985)

\bibitem{GJ87} J. Glimm and A. Jaffe: 
	{\it Quantum Physics---A Functional Integral Point of View.} 
	New York: Springer (1987)

\bibitem{Hog12} H. Hogreve: 
	The overcritical Dirac--Coulomb operator. 
	{\it Journal of Physics A: Mathematical and Theoretical} {\bf 46}: 025301 (2012)

\bibitem{KS15} S. Keppeler and M. Sieber: 
	Particle creation and annihilation at interior boundaries: one-dimensional models.
	{\it Journal of Physics A: Mathematical and Theoretical} {\bf 49}: 125204 (2016) 
	\url{http://arxiv.org/abs/1511.03071}


\bibitem{Lam18} J.~Lampart:
	A nonrelativistic quantum field theory with point 
	interactions in three dimensions. 
	{\it Annales Henri Poincar\'e} {\bf 20}: 3509--3541 (2019)
	\url{http://arxiv.org/abs/1804.08295}

\bibitem{LS18} J. Lampart and J. Schmidt:
	On Nelson-type Hamiltonians and abstract boundary conditions. 
	\textit{Communications in Mathematical Physics} {\bf 376}: 629--663 (2019)
	\url{http://arxiv.org/abs/1803.00872}

\bibitem{ibc2a} J. Lampart, J. Schmidt, S. Teufel, and 
	R. Tumulka: 
	Particle Creation at a Point Source by Means of 
	Interior-Boundary Conditions. 
	{\it Mathematical Physics, Analysis, and Geometry} 
	{\bf 21}: 12 (2018)
	\url{http://arxiv.org/abs/1703.04476}



\bibitem{Lee54} T. D. Lee:
	Some Special Examples in Renormalizable Field Theory.
	\textit{Physical Review} \textbf{95}: 1329--1334 (1954)



\bibitem{LN18} M.~Lienert and L.~Nickel:
	Multi-time formulation of particle creation and annihilation via 
	interior-boundary conditions.
	{\it Reviews in Mathematical Physics} {\bf 32(2)}: 2050004 (2020)
	\url{http://arxiv.org/abs/1808.04192}

\bibitem{Mosh51a} M. Moshinsky:
	Boundary Conditions for the Description of Nuclear Reactions.
	{\it Physical Review} {\bf 81}: 347--352 (1951)

\bibitem{Mosh51b} M. Moshinsky:
	Boundary Conditions and Time-Dependent States.
	{\it Physical Review} {\bf 84}: 525--532 (1951)

\bibitem{Mosh51c} M. Moshinsky: 
	Quantum Mechanics in Fock Space.
	{\it Physical Review} {\bf 84}: 533 (1951)

\bibitem{ML91} M. Moshinsky and G. L\'opez Laurrabaquio: 
	Relativistic interactions by means of boundary conditions: The Breit--Wigner formula.
	{\it Journal of Mathematical Physics} {\bf 32}: 3519--3528 (1991)

\bibitem{Nel64} E.~Nelson: 
	Interaction of Nonrelativistic Particles with a Quantized Scalar Field.
	\textit{Journal of Mathematical Physics} \textbf{5}: 1190--1197 (1964)

\bibitem{Opp30} J.~R.~Oppenheimer: 
	Note on the Theory of the Interaction of Field and Matter.
	\textit{Physical Review} \textbf{35}: 461--477 (1930)

\bibitem{reedsimon2} M. Reed and B. Simon:
	{\it Methods of Modern Mathematical Physics II: 
	Fourier Analysis, Self-adjointness}.
	Academic Press (1975)

\bibitem{Sch18} J. Schmidt:
	On a Direct Description of Pseudorelativistic Nelson Hamiltonians. 
	{\it Journal of Mathematical Physics} {\bf 60}: 102303 (2019)
	\url{http://arxiv.org/abs/1810.03313}

\bibitem{IBCdiracCo1} J. Schmidt, S. Teufel, and R. Tumulka:
	Interior-Boundary Conditions for Many-Body Dirac Operators and Codimension-1 Boundaries.
	{\it Journal of Physics A: Mathematical and Theoretical} {\bf 52}: 295202 (2019)
	\url{http://arxiv.org/abs/1811.02947}

\bibitem{ST18} J. Schmidt and R. Tumulka:
	Complex Charges, Time Reversal Asymmetry, and 
	Interior--Boundary Conditions in Quantum Field Theory.
	{\it Journal of Physics A: Mathematical and Theoretical} {\bf 52}: 115301 (2019)
	\url{http://arxiv.org/abs/1810.02173}



\bibitem{Sve81} E.C. Svendsen: 
	The effect of submanifolds upon essential self-adjointness and 
	deficiency indices. 
	{\it Journal of Mathematical Analysis and Applications} 
	{\bf 80}: 551--565 (1981)

\bibitem{TT15a} S. Teufel and R. Tumulka:
	Hamiltonians Without Ultraviolet Divergence for Quantum Field Theories.
	{\it Quantum Studies: Mathematics and Foundations} \textbf{8}: 17--35 (2021)
	\url{http://arxiv.org/abs/1505.04847}

\bibitem{TT15b} S. Teufel and R. Tumulka:
	Avoiding Ultraviolet Divergence by Means of Interior--Boundary Conditions.
	Pages 293--311 in F. Finster, J. Kleiner, C. R\"oken, and J. Tolksdorf (editors), 
	\textit{Quantum Mathematical Physics}.
	Basel: Birkh\"auser (2016)
	\url{http://arxiv.org/abs/1506.00497}

\bibitem{Tha91} B. Thaller: 
	{\it The Dirac Equation}. 
	Heidelberg: Springer (1991)

\bibitem{Tho84} L.E. Thomas:
	Multiparticle Schr\"odinger Hamiltonians with point interactions.
	{\it Physical Review D} \textbf{30}: 1233--1237 (1984)

\bibitem{timelike} R. Tumulka:
	Bohmian Mechanics at Space-Time Singularities. I. Timelike Singularities.
	{\it Journal of Geometry and Physics} {\bf 145}: 103478 (2019)
	\url{http://arxiv.org/abs/0708.0070}

\bibitem{detect-dirac} R. Tumulka:
	Detection Time Distribution for the Dirac Equation.
	Preprint (2016)
	\url{http://arxiv.org/abs/1601.04571}

\bibitem{co1} R. Tumulka:
	Interior--Boundary Conditions for Schr\"odinger Operators on Codimension-1 Boundaries.
	{\it Journal of Physics A: Mathematical and Theoretical} {\bf 53}: 155201 (2020)
	\url{http://arxiv.org/abs/1808.06262}

\bibitem{vH52} L. van Hove:
	Les difficult\'es de divergences pour un mod\`ele particulier de champ quantifi\'e.
	\textit{Physica} \textbf{18}: 145--159 (1952) 

\bibitem{Wei87} J. Weidmann: 
	{\it Spectral Theory of Ordinary Differential Operators}. 
	Heidelberg: Springer (1987)

\bibitem{Yaf92} D.R. Yafaev:
	On a zero-range interaction of a quantum particle with the vacuum.
	{\it Journal of Physics A: Mathematical and General} 
	\textbf{25}: 963--978 (1992)



\end{thebibliography}
\end{document}